\DeclareSymbolFont{tipa}{T3}{cmr}{m}{n}
\DeclareMathAccent{\invbreve}{\mathalpha}{tipa}{16}
\newtheorem{theorem}{Theorem}
\title{Achieving Capacity Region of 2-users Weak GIC by Enlarging the Core in a Nested Set of Polymatroids\\
{\large in continuation of arXiv:2012.07820 entitled:
``Optimality of Gaussian in Enlarging HK  Rate Region, and its Overlap with the  Capacity Region of 2-users GIC"} 
\footnote{This draft  provides derivation details for an earlier submission reported in \cite{HK1} (also see~\cite{HK2}).}}
\author[1]{Amir K. Khandani\thanks{E\&CE Department, University of Waterloo, Waterloo, Ontario, Canada, khandani@uwaterloo.ca.}}
\begin{document}

\vspace{-4cm}
\maketitle

\vspace{-1cm}
\begin{abstract}
This article shows that  achieving capacity region of a 2-users weak Gaussian Interference Channel (GIC) is equivalent to enlarging the core in a nested set of polymatroids (each equivalent to capacity region of a multiple-access channel) through maximizing a minimum rate, then projecting along its orthogonal span and continuing recursively.  This formulation relies on  defining dummy private messages to capture the effect of interference in GIC. It follows that relying on independent Gaussian random code-books is optimum, and the corresponding solution corresponds to achieving the boundary in HK constraints. 
\end{abstract}

This article provides an alternative view to the results reported in \cite{HK1}.  It is a continuation of an earlier report~\cite{HK2}.  It provides an alternative method to prove optimality of independent Gaussian density functions in generating random code-books in 2-users GIC, initially reported in \cite{HK1}. For the sake of simplicity,  discussions are focused on the weak interference, i.e.,  $a< 1$ and $b<1$, and  
on the case that both users have both private and public messages (see \cite{HK1} for related conditions). Other cases can be handled using the arguments presented here by setting some of the code-books to be empty in the formulation.   

\section{Introduction}

\subsection{Optimality of Gaussian Random Code-books in All-Private Messaging}
Let us assume, in the optimum solution to a 2-users GIC, power allocated to private messages are equal to $\hat{P}_1(\mu)$ and $\hat{P}_2(\mu)$, respectively.
Nested optimality condition derived in \cite{HK1} states that the solution to the same 2-users GIC with power constraints $P_1=\hat{P}_1(\mu)$ and $P_2=\hat{P}_2(\mu)$ will be composed of private messages only. Note that $\hat{P}_1(\mu)$ and $\hat{P}_2(\mu)$ are the lowest power levels for which relying on private messages only results in an optimum solution. 
Another perspective is to allow for private messages only. In this case, the power budget for each private messages will be the entire available power, i.e.,  $P_1$ and $P_2$, respectively, however, as we will see, one of the two transmitters may use less than its corresponding maximum available power. For $\mu\leq 1$, it will be transmitter 2 which may not send at full power $P_2$. For $\mu\geq 1$, it will be transmitter 1 which may not send at full power $P_1$. In any event, the inclusion of constraints on power renders the problem intractable. For this reason, we rely on the parameter of entropy that relates to power and random coding density function. We will be dealing with channels with noise terms that are independent of channel input and are composed of additive Gaussian noise plus interference. In any such channel, a straightforward application of data processing theorem confirms that, the lower is the entropy of  noise plus interference, the higher will be the corresponding capacity value.  

{\bf A review of Entropy Power Inequality:} Consider two random variables; $A$ has an entropy $H_A$ and a power $P_A$, and $G$ is Gaussian $\mathcal{N}(0,1)$. We have

\begin{equation} 
e^{2\breve{H}_A}+ e^{\log_2(2\pi e)} \leq
e^{2H\left(A+G\right)}  \leq  e^{\log_2\left(2\pi e(P_A+1)\right)},
\label{EEn1}
\end{equation}
where $H(\alpha)$ specifies the entropy of the random variable $\alpha$. The left hand side, which is an expression of Entropy Power Inequality,  is satisfied with equality if $A$ is a Gaussian of entropy $H_A$, and consequently of power 
\begin{equation} 
\tilde{P}_A=\frac{1}{2\pi e} 2^{2H_A}.
\label{EEn2}
\end{equation}
This case  means among all densities with with a limit on entropy, Gaussian results in minimizing the entropy of the sum, $A+G$.

The right hand-side, which is an expression of maximum entropy property of Gaussian, is satisfied with  equality if $A$ is Gaussian of power $P_A$. This case means among all densities with a limit on power, Gaussian results in maximizing the entropy of the sum, $A+G$.

Note that, if $\tilde{P}_A=P_A$, or equivalently, 
$H_A=0.5\log_2(2\pi e P_A)$, then the upper bound and the lower bound in \ref{EEn1} coincide. This case  means, in the sense of the two limits in \ref{EEn1},  Gaussian is the minimizer and also the maximizer of the entropy of $A+G$, since minimum and maximum values coincide.  $\blacksquare$ 

Private messages result in a two-inputs channel, maC1, at $Y_1$ and  a two-inputs channel, maC2, at $Y_2$. The two inputs at $Y_1$ are the 
private message of user 1 and the interference caused by the private message of user 2, and likewise for $Y_2$.
Let us consider maC1 and maC2, each composed of two layers: $L^{(1)}_1$, $L^{(1)}_2$ for maC1 and  $L^{(2)}_1$, $L^{(2)}_2$ for maC2 (see Fig.~\ref{Fig-n}).
  \begin{figure}[h]
   \centering
   \includegraphics[width=0.4\textwidth]{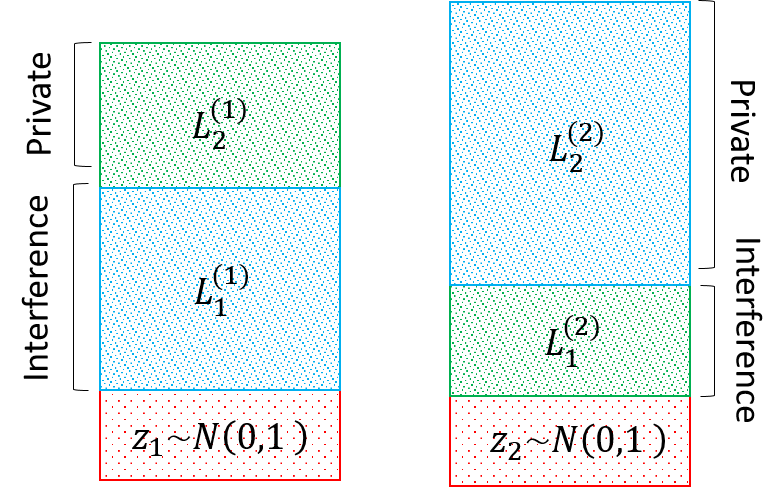}
   \caption{Layers involved in all-private messaging. }
   \label{Fig-n}
 \end{figure}

Let us assume in the optimum solution, we have $H\left(L^{(1)}_1\right) =\breve{H}_1$ and $H\left(L^{(2)}_1\right)  = \breve{H}_2$.  
Let us write the optimization problem by explicitly adding these two constraints.
 \begin{align} \label{La0}
\mbox{Maximize}~~~ & R_{ws}=R(L^{(1)}_2)+\mu R(L^{(2)}_2) \\ 
\mbox{where}~~ & \\ \label{La2}
R\left(L^{(1)}_2\right)&=H\left(L^{(1)}_2+L^{(1)}_1+z_1\right)-
H\left(L^{(1)}_1+z_1\right) \\ \label{La3}
R\left(L^{(2)}_2\right)&=H\left(L^{(2)}_2+L^{(2)}_1+z_2\right)
-H\left(L^{(2)}_1+z_2\right)\\  \label{La4} 
H\left(L^{(1)}_1\right)& =\breve{H}_1 \\ \label{La5}
H\left(L^{(2)}_1\right)&  = \breve{H}_2.
\end{align}
The  optimum solution subject to \ref{La0} to \ref{La5} will be the same as the optimum solution to \ref{La0} to \ref{La3}.
Now let us relax the problem. Relaxation is based on allowing the probability density functions for the four code-layers $L^{(1)}_1$, $L^{(1)}_2$,  $L^{(2)}_1$ and $L^{(2)}_2$ to be independent of each other. Selecting the random coding density for maximization of \ref{La0} narrows down to maximization of the entropy values $H\left(L^{(1)}_1+L^{(1)}_2+z_1\right)$ and  $H\left(L^{(2)}_1+L^{(2)}_2+z_2\right)$, and minimizing the  entropy values of 
$H\left(L^{(1)}_1+z_1\right)$ and $H\left(L^{(2)}_1+z_2\right)$ subject to \ref{La4} and \ref{La5} and power constraints. Assuming $z_1,z_2$ are $\mathcal{N}(0,1)$ and applying entropy power inequality to $L^{(1)}_1+z_1$ and to $L^{(2)}_1+z_2$, it follows 
 
\begin{align} \label{En1}
 e^{2H\left(L^{(1)}_1+z_1\right)}& ~\geq~e^{2\breve{H}_1}+ e^{\log_2(2\pi e)} \\ \label{En2}
 e^{2H\left(L^{(2)}_1+z_2\right)}& ~\geq~  e^{2\breve{H}_2}+ e^{\log_2(2\pi e)}.
\end{align}
with equality if $L^{(1)}_1$ and $L^{(2)}_1$ are Gaussian. Note that equality conditions in \ref{En1} and \ref{En2} result in minimizing the entropy of $H\left(L^{(1)}_1+z_1\right)$ and $H\left(L^{(2)}_1+z_2\right)$, respectively. Relying on the relationship between entropy and power in a Gaussian random variable, the power of interference terms and consequently the power of code-book used by each transmitter will be determined. Given the power values, the entropy of signal terms, i.e., $H\left(L^{(1)}_1+L^{(1)}_2+z_1\right)$ in \ref{La2} and $H\left(L^{(2)}_1+L^{(2)}_2+z_2\right)$ in \ref{La3} are maximized, since all random variables in summations constructing the two signals are Gaussian, each with its maximum allowable power value. Note that one of the two transmitters may use only part of its power budget, and in this case, the allowable power value for that transmitter is governed by some other considerations to be discussed.   

 A question remains  concerning the possibility of realizing the target entropy levels $\breve{H}_1$ and $\breve{H}_2$ using Gaussian density functions, and its impact on the signal terms. The answer to the question is positive since Gaussian requires the least amount of power for a given entropy. In other words, if some other density function has been capable of generating entropy values $\breve{H}_1$ and $\breve{H}_2$ within the available power budgets, then Gaussian density can do it as well. By selecting Gaussian density functions to realize the target entropy levels, we are maximizing the individual capacity values in \ref{La2} and \ref{La3} by minimizing the entropy of their respective interference plus noise terms. For example, in \ref{La2}, $L^{(1)}_2$ is the signal and  $L^{(1)}_1+z_1$ (a term independent from the corresponding signal) is the noise-plus-interference. According to data processing theorem, a channel with a noise of a lower entropy will have a higher capacity, regardless of the impact of noise in signal term, i.e., in $L^{(1)}_2+L^{(1)}_1+z_1$. On the other hand, the power of interference term $L^{(1)}_1$ affects the second capacity term in \ref{La3}. Indeed, in this setup, as we are avoiding the use of public messages, one of the two transmitters may not use its entire power budget. The important point is that the transmitter sending less than full-power should rely on a Gaussian density because it minimizes the impact of the interference term on the capacity achieved by the transmitter sending at full power, and at the same time, it maximize the entropy of both signal terms subject to the allowable power values used by each transmitter.  
Expressing \ref{La0} as, 

 \begin{align} \label{C1}
R_{ws}& = 
 H\left(L^{(1)}_2+L^{(1)}_1+z_1\right)-H\left(L^{(1)}_1+z_1\right)
+\mu \left[H\left(L^{(2)}_2+L^{(2)}_1+z_2\right)
-H\left(L^{(2)}_1+z_2\right)\right]=A+B \\ \nonumber 
&\mbox{where} \\ \label{C2}
& A =H\left(L^{(1)}_2+L^{(1)}_1+z_1\right)-\mu H\left(L^{(2)}_1+z_2\right),
 ~~B=\mu H\left(L^{(2)}_2+L^{(2)}_1+z_2\right)-H\left(L^{(1)}_1+z_1\right).
\end{align}
Note that the power of interference term $L^{(2)}_1$, denoted as $P\left(L^{(2)}_1\right)$, satisfies (see Fig.~\ref{fig1} for definition of cross channel gains $a$ and $b$), 
\begin{equation}
P\left(L^{(2)}_1\right)=bP\left(L^{(1)}_2\right),
\label{Rn1}
\end{equation}
and $P\left(L^{(1)}_1\right)$ satisfies
\begin{equation}
P\left(L^{(1)}_1\right)=aP\left(L^{(2)}_2\right).
\label{Rn2}
\end{equation} 
Replacing for entropy values of Gaussian terms, and then computing derivatives of \ref{C1} and \ref{C2} with respect to the power values (subject to \ref{Rn1} and \ref{Rn2}), it is straightforward to show that $A$ and/or $B$ in \ref{C2} will be maximized by having its corresponding transmitter to transmit at full power, i.e., at least one of transmitters uses all its power budget.  

Reference \cite{HK1} shows that the lower part of the boundary curve, i.e., $\mu\leq 1$, for 2-users weak GIC starts from a corner point in which user 1 sends an all-private message and user 2 sends an all-public message. The case considered in this section does not permit the use of public messages, so the corner point in an all-private signaling will correspond to user 1 sending a private message at full-power and user 2 remaining silent. Then, in \cite{HK1},  moving counterclockwise along the lower part of the boundary, the power allocated to private message of user 2 gradually increases, while the power allocated to its public message decreases. However, up to point $D_3$  (see Fig. 11 of \cite{HK1}) user 1 continues sending an all-private message at full power. 
In an all-private signaling scheme presented in this section, the power allocated to private messages will be the same as the case of \cite{HK1}, while public messages accompanying  the private messages in \cite{HK1} will be dropped. This allows moving from corner point $A$ to point $D_3$ by keeping the power of private messages the same as in \cite{HK1}, while putting the power of accompanying public messages (involved in forming Fig. 11 of \cite{HK1}) equal to zero.     An all-private boundary curve does not extend beyond the counterpart  of point $D_3$ (counterpart means the same point without public messages). 

The arguments presented in this section, in terms of using Gaussian density for interference term in order to minimize the entropy of the noise-plus-interference, can be better understood by expressing the problem of computing the lower part of the boundary curve as follows:

\begin{align} \label{CCC1}
\mbox{Maximize}~~& R\left(L^{(1)}_2\right)=H\left(L^{(1)}_2+L^{(1)}_1+z_1\right)-
H\left(L^{(1)}_1+z_1\right) \\ \label{CCC2}
\mbox{Subject to:}~~& R\left(L^{(2)}_2\right)=H\left(L^{(2)}_2+L^{(2)}_1+z_2\right)
-H\left(L^{(2)}_1+z_2\right)\equiv \mathsf{C}_2.
\end{align}
In covering the lower part of the boundary, the value of $\mathsf{C}_2$ will gradually increase in infinitesimal steps. As the objective is to maximize $H\left(L^{(1)}_2+L^{(1)}_1+z_1\right)-
H\left(L^{(1)}_1+z_1\right)$, the step for increasing $\mathsf{C}_2$ to $\mathsf{C}_2+\delta \mathsf{C}_2$ should be such that the entropy of the noise term $H\left(L^{(1)}_1+z_1\right)$ is minimized, which justifies the arguments starting from \ref{En1}  and concluding $L^{(1)}_1$ should be Gaussian. Using Gaussian code-books for other layers follow knowing $L^{(1)}_1$ is Gaussian. A similar argument applies to the upper part of the boundary, i.e., $\mu\geq 1$, using \ref{En2} and concluding interference term $L^{(2)}_1$ should be Gaussian. 

In \ref{La4} and \ref{La5}, the condition on the density functions are expressed in terms of corresponding  entropy values as parameters. Let us relax the problem (to obtain an upper bound) where the probability densities of four layers: $L^{(1)}_1$, $L^{(1)}_2$, $L^{(2)}_1$, $L^{(2)}_2$ are allowed to be selected independently. Using Gaussian density for $L^{(1)}_1$ and $L^{(2)}_1$ minimizes the entropy of additive noise-plus-interference in channels formed at $Y_1$ and $Y_2$ (see \ref{En1} and \ref{En2}).  Note that, in selecting a density function to realize certain entropy value, one needs to adhere to existing constraints on power. Since Gaussian requires least amount of power for a given entropy, if there exist densities satisfying 
\begin{align}
H\left(L^{(1)}_1\right) = \breve{H}_1 \\ 
H\left(L^{(2)}_1\right)  = \breve{H}_2.
\end{align}
subject to existing power constraints, then Gaussian density can also achieve the target entropy values within the allowable power limits. With Gaussian interference-plus-noise, the density of $L^{(1)}_2$ and $L^{(2)}_2$, which are each the input to a channel with additive Gaussian noise-plus-interference (noise-plus-interference term in each case is independent from corresponding channel input) should be Gaussian. A remaining question concerns the role of input power constraint in selecting the density functions for channel inputs, i.e., $L^{(1)}_2$ and $L^{(2)}_2$. In the case of  $L^{(1)}_2$, it is always at full power, meaning that the lower and upper bound in \ref{EEn1} coincide. 

A remaining point is to justify replacing the role of density function for $L^{(2)}_2$ in \ref{CCC2} with the entropy value. The problem is the relationship between density function and entropy value is not one-to-one.  Moving along the boundary can be also formulated in terms of an infinitesimal increase in the entropy of $H\left(L^{(1)}_1\right) =\breve{H}_1+\delta H$ in \ref{La4}.  
In the context of all-private messaging discussed here, a $\delta H$ step in entropy means increasing   the power of private message of user 2 by $\delta P$,  while user 1 continues to transmit at full power. For a Gaussian $L^{(1)}_1$, a fixed $\delta H$ is accompanied with minimum $\delta P$, i.e., minimum increase in the power of additive noise-plus-interference in \ref{CCC1}. In other words, use of Gaussian density in view of \ref{En1} (for minimizing the entropy of the additive noise-plus-interference) has also kept the power of additive noise-plus-interference at $Y_1$ minimum.   
In other words, in \ref{CCC2}, optimality condition necessitates realizing  $\mathsf{C}_2$ with minimum power for $L^{(2)}_2$. If one could achieve the value of $\mathsf{C}_2$ with a lower power for  $L^{(2)}_2$, it would translates to a lower power for the additive noise-plus-interference term at $Y_1$, which could be used to increase $R\left(L^{(1)}_2\right)$, contradicting the optimality of the solution. Gaussian $L^{(2)}_1$ minimizes the entropy of the interference-plus-noise at $Y_2$ (see \ref{En2}) and then using Gaussian $L^{(2)}_2$ would enable realizing the target capacity value of $\mathsf{C}_2$ with minimum power. Note that all these discussions are focused on $\mu\leq 1$, where $X_1$ sends at full power, i.e., $P_1$, and $X_2$ may send at a power level less than $P_2$. 

Noting above discussions, it is concluded that in the optimum solution to \ref{CCC1}, \ref{CCC2} (while relaxing consistency condition, i.e., allowing independent density functions for $L^{(1)}_1$, $L^{(1)}_2$, $L^{(2)}_1$, $L^{(2)}_2$), the consistency conditions have been automatically satisfied. This means, the optimum solution to the upper bound coincides with the optimum solution to the original problem.

\subsection{Considering both Public and Private Messages: Preliminaries}

Consider a memory-less multiple-access channel with inputs $X_1,\ldots,X_M$ and output $Y$ described by conditional distribution $f_{Y|X_1,\ldots,X_M}(y|x_1,\ldots,x_M)$, where fixed and independent density functions are deployed for generating random code-books. The capacity region of such a channel forms a polymatroid~\cite{HK3}. Sum-capacity facet is composed of $M!$ corner points which can be realized relying on different orderings in successive decoding of messages.  
The optimum solution in maximizing a weighted sum-rate falls on a corner point on the sum-capacity facet, which, through ordering of messages in successive decoding, assigns higher rate values to terms with a larger weight in the  weighted sum-rate (see Lemma 3.2 of ~\cite{HK3}). Reference \cite{HK4} shows that the fairest corner point, maximizing the minimum rate in a memory-less multiple-access channel, corresponds to ordering users according to their rate values in increasing order from bottom to top and applying successive decoding from top to bottom. 

Consider the problem of transmitting 
$M$ messages over a multiple-access channel with additive 
Gaussian noise $\mathcal{N}(0,1)$.
The solution is the set of all rate vectors $\mathsf{R}$ satisfying:
\begin{equation}
\mathsf{R}(S)\leq I\left(Y;X_i, i\in S~|~X_i, i\notin S\right)~~\forall S\subset\{1,2,\ldots,M\}~~\mbox{where}~~\mathsf{R}(S)=\sum_{i\in S} r_i.
\label{M1}
\end{equation}
The goal is to maximize a weighted sum-rate 
$L=\sum_{i=1}^M \lambda_i r_i$.  The power used for sending message 
$i$ is denoted as $p_i$. Instead of imposing a fixed limit on $p_i, i=1,\ldots,M$, the vector of power budgets, $\mathbf{p}=(p_1,\ldots,p_M)$, is limited by a set of linear constraints with positive coefficients. This results in a simplex region in the positive orthant, called a Power Vector Simplex, or $\mathsf{PVS}_M$, hereafter. 

Expressions in \ref{M1} are identical to the case of superposition coding with successive decoding. The difference between the two formulations stems from differences in constraints on power values. There are $2^M-1$ expressions defining \ref{M1}, with left hand sides composed of different sum-rates. Each rate value appears in $2^{M/2}$ expressions. We are interested in maximizing a weighted sum-rate, $R_{ws}$, over \ref{M1}. The optimum solution will be a vector of rates that maximizes $R_{ws}$ subject to the underlying power constraints. Constraints on power are a set of linear (equality) terms over power values allocated to different code-books. Assuming the optimum value is equal to $R^{opt}_{ws}$, the optimization problem can be expressed as realizing $R_{ws}=R^{opt}_{ws}$ using a power vector that satisfies the power constraints and is Pareto minimal, i.e., a decrease in any of its power values results in reducing the achieved $R_{ws}$. In other terms, the optimum rate vector maximizes $R_{ws}$ with minimum power.  
Given an optimum rate vector, the order of code-books can be changed if the set of constraints on their respective power values permit. In such a case, there will be multiple optimum solutions corresponding to permuting the positions of code-books in successive decoding, and accordingly adjusting their power levels to achieve the same optimum rate vector with  shuffling of its components. For example, if the constraint is on total power in a superposition code composed of $M$ code-books, if a vector of rate $\vec{r}$ is achievable, different rate vectors obtained by permuting the elements of $\vec{r}$ will be achievable as well. In our case, constraints on power are such that only one order for successive decoding can achieve $R^{opt}_{ws}$. In this case, the optimization problem is equivalent to minimizing the power values for realizing a desired rate vector subject to the optimum decoding order.

Let us use notation $\mathsf{PVS}_{M-K}(p_1,\ldots,p_K)$ to refer to power simplex available to massages indexed by $K+1$ to $M$ if the power vector for messages indexed by $1,\ldots,K$ is $(p_1,\ldots,p_K)$. Let us focus on the rate of the first message (decoded last), and denote it by $r_1$. For a given value of $r_1$, placing the message first minimizes the required $p_1$, and thereby enlarges the Power Vector Simplex available to remaining messages, i.e.,  
\begin{equation}
\mathsf{PVS}_{M-1}(\breve{p}_1)\subset \mathsf{PVS}_{M-1}(\overline{p}_1)~~~\mbox{if}~~~\overline{p}_1<\breve{p}_1.
\label{Eqq2}
\end{equation}
Minimizing $p_1$ also reduces the amount of interference observed by messages   $2,\ldots,M$, which are all ordered above the first message. More generally, let us assume, as part of problem definition, the  rates of the first $\ell$ messages (decoded last) are set at $r_1$ to $r_{\ell}$ .  Regardless of the form of power simplex, the optimum random coding for all layers will be independent Gaussian. Using a Gaussian density for a layer  results in:  (i) worst noise for layers above it, (ii) least amount of power allocated to the layer, thereby maximally enlarging the $\mathsf{PVS}$ governing the power vectors available for remaining code-books above it, (iii) the power of interference is minimized. Regardless of values of $r_1$ to $r_{\ell}$ and the structure of $\mathsf{PVS}$, the positive impacts in (ii) and (iii) are more advantageous than the negative impact in (i). Also see Theorems \ref{Gau}, \ref{Gau2} and Example 1.

{\bf Remark:} Let us consider maximizing a weighted sum-rate over a multiple-access channel (or equivalently in a superposition code). The power constraints are such that one of the non-zero rates in the optimum rate vector has zero weight in the weighted sum-rate.  The corresponding rate value contributes to the formation of optimal solution by appearing in the left hand sides of  $2^{M/2}$ expressions among the $2^M-1$ expressions in \ref{M1}. This means, it indirectly affects the achieved optimum value for the weighted sum-rate.  Solution of 2-users GIC narrows down to minimizing the power required to realize all non-zero rate values, which, directly or indirectly, affect the value of the weighted sum-rate. 
$\blacksquare$ 

Following Theorem summarizes the above arguments. 
\begin{theorem}
\label{NNN}
Consider the problem of maximizing a weighted sum-rate over a multiple-access region, or over the intersection of several multiple-access regions, subject to power constraints captured by certain $\mathsf{PVS}$. Assume there is a set, $S_R$, of  rate values with zero weight in the weighted sum-rate, which are assigned a non-zero rate value in the optimizing solution. This can be the case only if, subject to the set of equations corresponding to active constraints among the set in \ref{M1}, the rate values with non-zero weight turn out to be an increasing function of rate values in $S_R$. 
\end{theorem}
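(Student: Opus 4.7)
The plan is to argue by contradiction, leveraging the Pareto-minimality characterization of the optimum established in the preceding discussion: the optimum rate vector $\vec r^{\ast}$ achieves $R^{opt}_{ws}$ while using the minimum possible power allocated to each code-book (otherwise the freed power could be spent to boost some non-zero-weight rate). Suppose $r_j^{\ast}>0$ for some $j\in S_R$ (so $\lambda_j=0$), and suppose, contrary to the claim, that along the active subset of \ref{M1} no rate $r_k$ with $\lambda_k>0$ is a strictly increasing function of $r_j$.

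The first step is to construct a feasible perturbation that decreases $r_j$ by a small amount $\delta>0$ while holding the remaining members of $S_R$ fixed. Because $r_j$ enters the left-hand side of \ref{M1} with coefficient $+1$ in exactly those $2^{M/2}$ sum-rate constraints containing index $j$, shrinking $r_j$ strictly relaxes each such active constraint and leaves all others untouched; on the active face of the (intersection of) polymatroid(s) this translates into a local direction along which the active system remains consistent. By the hypothesis, none of the non-zero-weight rates $r_k$ decreases along this direction, so the objective $\sum_k \lambda_k r_k$ is not diminished. On the power side, decreasing $r_j$ strictly reduces $p_j$ for the fixed decoding order, which, via the nested inclusion \ref{Eqq2}, strictly enlarges the $\mathsf{PVS}$ available to the code-books above $j$.

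The second step closes the loop. The enlarged $\mathsf{PVS}$ permits at least one $r_k$ with $\lambda_k>0$ to be strictly increased; otherwise $\vec r^{\ast}$ would not be Pareto-minimal in power, which the preceding text has already shown to contradict optimality. This strict increase raises $R_{ws}$ above $R^{opt}_{ws}$, the desired contradiction. Hence whenever a zero-weight rate $r_j$ takes a positive value at the optimum, any feasible decrease of $r_j$ along the active constraint system must force some non-zero-weight rate downward, i.e., the non-zero-weight rates are monotonically increasing functions of the rate values in $S_R$ along the active system, exactly as the theorem asserts.

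The main obstacle I anticipate is the construction of the single-variable perturbation in step one when $|S_R|>1$ or when several active constraints overlap on the same rate indices: a descent direction in $r_j$ may be coupled to forced motion of other members of $S_R$, and in the intersection of two multiple-access regions the active face can be lower-dimensional than the single-channel case. I would handle this by working in the tangent cone of the active face of the contra-polymatroid of power vectors; the submodularity underlying the polymatroid structure guarantees enough directional freedom to isolate $r_j$ generically, with degenerate corner cases absorbed by taking one-sided limits and applying the same monotonicity comparison. A secondary subtlety worth flagging is that \emph{increasing function} must be read along the solution of the active system and not along arbitrary feasible perturbations; the argument above interprets it in that way, consistent with the surrounding discussion.
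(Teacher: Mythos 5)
Your contradiction argument is in the same spirit as the paper's proof --- both hinge on the idea that a non-zero rate costs power, and that at the optimum no power is wasted --- and your version is more explicit than the paper's very informal four-sentence sketch. However, the formalization introduces a step that doesn't hold in the setting the theorem is built for, and it is precisely the step where the contradiction is supposed to close.

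You argue that lowering $r_j$ (with $j\in S_R$) lets you lower $p_j$, which by the nesting \ref{Eqq2} enlarges the power simplex left over for code-books above $j$, after which Pareto-minimality forces a strict gain in some $\lambda_k>0$ rate. This presupposes $p_j$ is a free coordinate of the $\mathsf{PVS}$ that can be decreased independently. But the paper states explicitly that the $\mathsf{PVS}$ is cut out by a set of linear \emph{equality} constraints on the power vector, and in the motivating GIC instance the power of a dummy private message is tied by \ref{L4}--\ref{L7} to the power of a non-redundant private message of the other user (e.g.\ $P^{(2)}_{V_1}=bP^{(1)}_{V_1}$). Lowering $p_j$ therefore drags down a coupled $p_k$ with $\lambda_k>0$, so the ``freed'' power does not actually become available, and the enlargement of $\mathsf{PVS}_{M-1}(p_j)$ you invoke does not occur. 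This coupling is exactly the mechanism by which a zero-weight code-book ends up carrying positive rate at the optimum, so a correct contradiction argument must engage with the $\mathsf{PVS}$ equalities rather than route around them; otherwise the argument would prove the stronger, false claim that $r_j=0$ whenever $j\in S_R$. The paper's own proof acknowledges this indirect dependence (``the relevance of their rate value\ldots is indirectly embedded within the relationships imposed by the set of active constraints'') but also stops short of pinning it down. Your closing paragraph gestures at ``coupled'' perturbations, yet it names the coupling as arising from overlapping active \ref{M1} constraints rather than from the power-side equalities, which is where it actually lives. Reworking the perturbation so that it respects the $\mathsf{PVS}$ equalities --- i.e.\ perturb the smallest power sub-vector consistent with \ref{L4}--\ref{L9}, not a single $p_j$ --- would close the gap and align with the intent of the paper's (admittedly loose) proof.
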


\begin{proof}
Having a non-zero rate value for a code-book always comes at a cost in terms of power. Accordingly, power constraints permit allocation of power to code-books that directly or indirectly result in improving the achieved weighted sum-rate. For terms that have zero weight in the weighted sum-rate, the relevance of their rate value to improvement in the weighted sum-rate is indirectly embedded within the relationships imposed by the set of active constraints among expressions in \ref{M1}.  On the other hand, the right hand sides of the set of active constraints are a set of mutual information terms. Optimization of random code-books and power allocation result in minimizing the power (subject to the underlying  $\mathsf{PVS}$) required to achieve the rates  that directly or indirectly affect the weighted sum-rate. 
\end{proof}
Following Theorem will be beneficial is establishing the result of the Section \ref{sec1.2}.

\begin{theorem}
\label{NNN2}
Consider the problem of constructing a superposition code over an AWGN channel, $\mathcal{N}(0,1)$, with $2$ layers. The objective is to maximize the rate of the second layer. The total power is fixed at $\breve{P}$.  The problem is linked to a second problem which: (i) does not allow allocating zero power to the first layer, and (ii) the power allocated to the first layer will be minimized if the corresponding code-book is Gaussian. Use of Gaussian code-books for both layers is optimum. 
\end{theorem}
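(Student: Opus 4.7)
The plan is to interpret the ``linked second problem'' as the original two-layer setup augmented by a strictly positive lower bound $R_1 \ge r_1^{\star}>0$ on the rate of the first layer; this lower bound is what forces $p_1>0$ in (i) and what makes minimizing $p_1$ in (ii) meaningful. Following the convention introduced just before the statement --- the first message is the one decoded \emph{last}, and so sees a clean AWGN channel --- the achievable rates are
\[
R_1 = H(L_1+z)-H(z), \qquad R_2 = H(L_1+L_2+z)-H(L_1+z),
\]
subject to $p_1+p_2 \le \breve{P}$.

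First I would establish (ii). By the max-entropy (right-hand) inequality of \ref{EEn1} one has $H(L_1+z) \le \tfrac{1}{2}\log\bigl(2\pi e(p_1+1)\bigr)$, so the constraint $R_1 \ge r_1^{\star}$ forces $p_1 \ge e^{2r_1^{\star}}-1$, with equality exactly when $L_1$ is Gaussian. Because $r_1^{\star}>0$, this minimum is strictly positive, which is (i). Next I would bound $R_2$ from above. Setting $p_1 = e^{2r_1^{\star}}-1$ with Gaussian $L_1$ fixes $H(L_1+z) = r_1^{\star}+\tfrac{1}{2}\log(2\pi e)$, so $R_2 = H(L_1+L_2+z) - r_1^{\star} - \tfrac{1}{2}\log(2\pi e)$. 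Applying the max-entropy bound once more,
\[
R_2 \;\le\; \tfrac{1}{2}\log(\breve{P}+1) - r_1^{\star},
\]
with equality only when $p_1+p_2=\breve{P}$ and $L_1+L_2+z$ is Gaussian, which forces $L_2$ (independent of $L_1$) to be Gaussian as well. Hence the pair of independent Gaussians attains the joint optimum.

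The main obstacle, as I see it, is the coupling of the two entropy bounds: the upper bound on $H(L_1+z)$ used to extract the minimum $p_1$, and the upper bound on $H(L_1+L_2+z)$ used to extract the maximum $R_2$, could a priori be saturated by different joint densities. I would close this by observing that the single choice of independent Gaussians at powers $p_1=e^{2r_1^{\star}}-1$ and $p_2=\breve{P}-p_1$ makes every inequality in the two steps above tight simultaneously; any non-Gaussian deviation either raises the required $p_1$, shrinking the power available to $L_2$, or drops $H(L_1+L_2+z)$ below its max-entropy ceiling, and neither loss can be recouped within the $p_1+p_2\le\breve{P}$ budget.
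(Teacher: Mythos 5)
Your proposal is correct and follows essentially the same route as the paper's own proof: you reformulate condition (ii) as a lower bound $R_1\ge r_1^{\star}$ on the first layer's mutual information, derive $R_2 \le \tfrac{1}{2}\log(1+\breve{P})-r_1^{\star}$ from the max-entropy bound under the total-power budget, and identify simultaneous Gaussianity of both (independent) layers as the equality condition --- which is exactly the content of the paper's equation \ref{G11} and its surrounding sentence. Your closing paragraph, which checks that the two entropy inequalities can be saturated by a single joint density, makes explicit a point the paper leaves implicit; the only blemish is a notational one, writing $e^{2r_1^{\star}}-1$ where the paper's $\log_2$ convention would give $2^{2r_1^{\star}}-1$, which does not affect the argument.
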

\begin{proof}
The condition that  ``power allocated to the first layer will be minimized if corresponding  code-books is Gaussian'' is equivalent to the condition that the ``mutual information between first layer and output should be larger than or equal to a given value'', say $\breve{r}_1$. Rate of the second layer is bounded by 
\begin{equation}
\label{G11}
r_2\leq 0.5\log_2(1+\breve{P})-\breve{r}_1.
\end{equation} 
The condition for equality in \ref{G11} is that both layers are Gaussian. 
\end{proof}
Theorem \ref{NNN2} can be generalized to the case of maximizing a weighted sum-rate over a superposition code (or a multiple-access channel) with $M$ messages.  Weighted sum-rate is defined over layers $\ell+1,\ldots,M$, and the power allocated to layers $1,\ldots,\ell$ will be minimized if their code-books are Gaussian. It follows that using independent Gaussian code-books for all layers is optimum. 
As will be shown in Section \ref{sec1.2}, in the case of 2-users GIC, there are two such superposition codes that are linked together. Each code-book in one superposition code has a pair code-book in the second one. Random coding alphabets in code-books forming a pair are obtained by scaling the output of the same generator.  Weighted sum-rate involves layers $\ell+1,\ldots,M$, $\ell=M/2$, in the two superposition codes.  The code-books in layers $1,\ldots,\ell$ in one superposition code are paired with code-books in layers $\ell+1,\ldots,M$ in the other one, and vice versa. Noting Theorem \ref{NNN2}, use of Gaussian code-books is optimum in terms of maximizing the weighted sum-rate over the two superposition codes. 

\subsection{Construction of Code-books Relying on Infinitesimal Layers} \label{sec1.2}
Complexity of solving \ref{L1} to \ref{L13} is partially due to the inclusion of power allocation problem. To overcome this issue, in \cite{HK1}, the power budget $P_1$ and $P_2$ are each divided into a continuum of $L_1=P_1/\delta$ and $L_2=P_2/\delta$ infinitesimal values of power $\delta$, respectively, each called an infinitesimal layer. 
$Y_1$ and $Y_2$, each receive all the $L_1+L_2$ infinitesimal layers. There is a  close connection between ``superposition coding in a point-to-point link" and ``relying on separate transmitters in a multiple-access channel". In particular, in both configurations, using multi-level code-books with independent layers in transmitter(s) and successive decoding in receiver(s) is optimum (assuming memory-less channels). 
Relying on this viewpoint, the optimization problem in \ref{L1} to \ref{L13} can be considered as a setting in which $L_1+L_2$ separate transmitters, each with power $\delta$, send their signals to two receivers, $Y_1$ and $Y_2$. 
In this case, $Y_1$ and $Y_2$ will each receive $L_1+L_2$ infinitesimal code-books, with a one-to-one correspondence between code-books at $Y_1$ and those at $Y_2$ (in  the sense that two copies, with different scale factors, of each infinitesimal code-book are received, one at $Y_1$ and the other one at $Y_2$). Optimizing the weighted sum-rate over this setup narrows down to determining the rate of each code layer, and deciding the order of successive decoding at each of the two receivers. Note that here we are dealing with two sets of constraints, each of the form in \ref{M1}. Solving the two sets of constraints of the form in \ref{M1} results in a subset of the corresponding constraints to be satisfied with equality (active constraints).  Active constraints can be solved as a set of equations to determine the rates of each infinitesimal code-book.  
Successive decoding of infinitesimal code-books at $Y_1$ and $Y_2$ proceeds from top to bottom. infinitesimal code-books constructing public messages are decoded first. Then comes the private code-book of user 1 at $Y_1$ and the private code-book of user 2 at $Y_2$. The key point is that, the optimization of power allocation, which is determining the number of layers in public and private messages, will be such that after decoding of infinitesimal public code-books, infinitesimal private code-books of user 1 are decoded at $Y_1$ and infinitesimal private code-books of user 2 are decoded at $Y_2$. For each infinitesimal private code-book at $Y_1$, there is a corresponding copy at $Y_2$ (called dummy private message of user 1) which is placed after all infinitesimal private code-books of user 2 . Likewise, for each infinitesimal private code-book at $Y_2$, there is a corresponding copy at $Y_1$ (called dummy private message of user 2) which is placed after all infinitesimal private code-books of user 1. This ordering is a consequence of optimization problem, and can be explained relying on Theorem \ref{Th1}.    

Consider a pair of infinitesimal code-books $\Omega_1$ and  $\Omega_2$, formed at $Y_1$ and $Y_2$, respectively. Let us assume rate of $\Omega_1$, denoted as $\delta r_1$ contributes to the weighted sum-rate. Using Gaussian random code-books  minimizes the amount of power in $\Omega_1$ for the given rate $\delta r_1$. On the other hand, $\Omega_2$ is formed by the same probability density function after being subject to a scale factor corresponding to relevant channel cross gain. As a result, minimizing the power of $\Omega_1$ will also minimize the power of $\Omega_2$. Assuming $\Omega_1$ is part of the private message of user 1, then $\Omega_2$ will act as interference at $Y_2$. Note that the total power received at $Y_1$ is fixed at $P_1+aP_2$ and total power received at $Y_2$ is fixed at $bP_1+P_2$, as required in Theorem \ref{NNN2}. 
By minimizing power allocated to private message of user 1, the power of interference observed by user 2 is minimized, but at the cost of having a Gaussian density for the interference term. Noting Theorem \ref{NNN2}, the overall trade-off will be in favor of relying on Gaussian density for all infinitesimal code-books forming the private messages of users 1 and 2.  

Following theorem summarizes the above arguments. 

\begin{theorem}
\label{GG}
Construction of code-books relying on infinitesimal Gaussian layers optimizes the weighted sum-rate in 2-users GIC.  
\end{theorem}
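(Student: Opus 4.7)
The plan is to reduce the claim to the generalized form of Theorem \ref{NNN2} applied to the two linked superposition codes formed at $Y_1$ and $Y_2$, and to use Theorem \ref{NNN} to handle the dummy private layers whose rates have zero weight in the weighted sum-rate. First I would fix the infinitesimal-layer decomposition already set up in Section \ref{sec1.2}: split $P_1$ and $P_2$ into $L_1+L_2$ layers of power $\delta$, and view the channel as a pair of multiple-access channels at $Y_1$ and at $Y_2$ in which every infinitesimal layer appears in both, the code-book at one receiver being a deterministic rescaling (by the cross gain $a$ or $b$) of the code-book at the other. Fix the decoding order specified in Section \ref{sec1.2}: at each receiver, public layers are decoded first, then the intended private layers, and the dummy copies of the other user's private layers appear at the bottom of the stack.

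Next I would proceed layer by layer, from the top of the decoding stack downward, by induction. For a generic layer pair $(\Omega_1,\Omega_2)$ whose rate contributes with nonzero weight to the weighted sum-rate (say through $\Omega_1$ at $Y_1$), the induction hypothesis has already made everything above $\Omega_1$ at $Y_1$ Gaussian and of minimum power, so $\Omega_1$ faces an AWGN channel with a fixed residual power budget. The generalized form of Theorem \ref{NNN2} then applies: among all densities realizing the prescribed rate for $\Omega_1$, Gaussian minimizes the power consumed. Because $\Omega_2$ is a rescaling of the same underlying generator, $\Omega_2$ is automatically Gaussian and has minimum power as well, which (i) enlarges the $\mathsf{PVS}$ available to remaining layers at $Y_2$ via (\ref{Eqq2}), and (ii) minimizes the interference inflicted on layers above $\Omega_2$ at $Y_2$, at the cost of shaping that interference as Gaussian---a trade certified to be favorable in the discussion preceding Theorem \ref{NNN2}.

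To close the recursion I would invoke Theorem \ref{NNN} for the dummy private layers, whose rates do not appear directly in $R_{ws}$ but enter the active multiple-access constraints (\ref{M1}) at the opposite receiver. Theorem \ref{NNN} guarantees that a nonzero dummy rate is assigned at the optimum only when it forces the nonzero-weight rates upward through those inequalities, so the objective benefits from realizing that dummy rate with the least possible power; by the left-hand inequality in (\ref{EEn1}) this is exactly what a Gaussian density achieves. Iterating from the top of both stacks to the bottom, every infinitesimal layer---rate-carrying or dummy---is forced to be Gaussian in the optimum. Passing $\delta\to 0$ recovers the claim for the original 2-users GIC.

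The main obstacle I anticipate is the coupling between the two stacks: a single density choice for $\Omega_1$ dictates the density of $\Omega_2$, and the two effects pull in opposite directions (minimum power is good for $Y_1$'s rate, but Gaussian-shaped interference is the worst possible noise shape for layers above $\Omega_2$ at $Y_2$). The argument goes through only because the pairing preserves Gaussianity under scaling, so a single Gaussian choice is simultaneously optimal on both sides, and because Theorem \ref{NNN2} has already certified that the power-reduction and $\mathsf{PVS}$-enlargement benefits outweigh the worst-noise penalty in the cumulative balance across all paired layers.
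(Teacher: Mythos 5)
Your proposal shares the paper's core ideas — the infinitesimal-layer decomposition, the pairing of a code-book at $Y_1$ with a rescaled copy at $Y_2$, the appeal to Theorem~\ref{NNN2} for linked superposition codes, and the crucial observation that Gaussianity is preserved under scaling so a single Gaussian choice satisfies the consistency constraint ``for free'' — and the conclusion is correct. The paper's own proof, however, is organized as a relax-then-restore-consistency argument: it first drops the requirement that the two members of a pair share a density (yielding an upper bound), shows that at each receiver considered in isolation the optimal density for every infinitesimal layer is an independent Gaussian of transmit power $\delta$ (because at least one member of each pair is decoded and its rate enters the weighted sum-rate directly), and then observes that this choice automatically satisfies the consistency requirement, so the relaxed upper bound is achievable. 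No layer-by-layer induction is invoked; the optimality of Gaussian across the whole stack comes in one shot from the sum-rate bound underlying Theorem~\ref{NNN2} and its $M$-layer generalization, which handle all positions simultaneously.

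The specific place where your scaffolding is weakest is the top-down induction. When a layer $\Omega_1$ sits at position $k$ in the decoding order (decoded $k$-th), the noise it must cope with is the interference from the layers decoded \emph{after} it (those below it in the stack, positions $k+1,\ldots,M$) plus the AWGN, not from the layers you have already fixed above it. Knowing layers $1,\ldots,k-1$ are Gaussian with minimum power tells you how much total power remains for positions $k,\ldots,M$, but it does not tell you that $\Omega_1$ ``faces an AWGN channel'' — that would require the as-yet-undetermined layers $k+1,\ldots,M$ to already be Gaussian, which is precisely what the inductive step is supposed to establish. The claim that Gaussian minimizes the power required for a prescribed rate $\delta r_k$ also presumes the residual interference is Gaussian; against an arbitrarily shaped interference it need not hold. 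The paper sidesteps this circularity entirely by applying the sum-rate bound (as in the proof of Theorem~\ref{NNN2}, where $r_2\le 0.5\log_2(1+\breve{P})-\breve{r}_1$ with equality iff all layers are Gaussian) to the whole stack at once rather than one layer at a time. If you rephrase your argument to invoke the generalized Theorem~\ref{NNN2} globally for each receiver's relaxed stack — rather than marching layer by layer — your proposal converges to the paper's and the gap disappears.
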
   

\begin{proof}
Let us focus on user 1. 
Optimality of Gaussian random codes for  infinitesimal code-books is due to the fact that, at $Y_1$, the set of public layers and private layers of user 1 are equivalent to a superposition code with successive decoding subject to AWGN plus interference from dummy private messages of user 2. For each infinitesimal code-book in the private message of user 1, say code-book $\Omega$, there is a counterpart copy, say $\breve{\Omega}$, with the same probability density at $Y_2$ (with some scaling due to relevant channel cross gain). The rate that could be embedded in  $\breve{\Omega}$ is less than the rate of $\Omega$, simply because rate of $\breve{\Omega}$ does not count in the weighted sum-rate. The optimization governing rates of infinitesimal  code-books could set the rate of the corresponding layer at the smaller value corresponding to $\breve{\Omega}$, which would result in the code-book to be decoded at both $Y_1$ and $Y_2$, or set the rate at the higher value corresponding to 
${\Omega}$, which would entail the code-book can be decoded only at $Y_1$. The option of decoding the infinitesimal code-book at both $Y_1$ and $Y_2$ would entail it is indeed part of a public message. This cannot be the case since the optimization problem has determined a fixed (optimum) number of layers to be part of public messages. This is indeed a statement of nested optimality condition. Regardless of how the rate of each infinitesimal code-book is decided, the optimum random code for the construction of all infinitesimal code-books at $Y_1$ as well as at $Y_2$ (when considered independently) is based on using independent Gaussian density of power $\delta$ for all code-books ($\delta$ is the power at the transmit side, it will be subject to channel gain before reaching the receiver side). Since the use of Gaussian code-books for all layers satisfy the consistency requirement (in terms of having the same density for counterpart copies of the same infinitesimal code-books), then Gaussian density will be also optimum when the constraints at $Y_1$ and $Y_2$ are considered together. Note that consistency requirement on random code-books is equivalent to what is expressed in \ref{L10} to \ref{L13} for the case of four code-books. Indeed, the four code-books in \ref{L1} to \ref{L13} are obtained by combining consecutive infinitesimal Gaussian code-books that could form a continuous layer (see \cite{HK1} for definition).  

To shed some lights, let us present the proof in a different language.  Let us consider an upper bound on the solution in which the random coding density function for the two code-books forming a pair can be selected independent of each other. Consider an ordering of layers at $Y_1$ and an independent ordering of layers at $Y_2$.  Consider a strategy in which the random coding density for each layer is selected (independent of its pair) to maximize its rate given its position in successive decoding. Let us consider a pair of layers achieving rates $\breve{r}_1$ and $\breve{r}_2$ at $Y_1$ and $Y_2$, respectively.  Once the rates of layers are independently optimized (using independent Gaussian code-books of power $\delta$ for all layers) a second algorithm decides for  a single rate for the two code-books forming a pair, i.e., for the pair of rates $\breve{r}_1$ and $\breve{r}_2$, the final rate will be selected as $[\breve{r}_1~\mbox{or}~\breve{r}_2]$ depending on the contribution of the code-book to weighted sum-rate. As a result, for some of the pairs (forming private messages) only one of the two code-books will be decodable. However, for all pairs, at least one of the two code-books, called non-redundant component of the pair (vs. redundant component), will be decoded and its rate directly appears in the weight sum-rate. Using Gaussian random code-book is necessary to maximize the rate of the non-redundant component, and as a side-effect, it also maximizes the rate of redundant component. The consistency condition is automatically satisfied since the optimum density function for all layers, regardless of their position in successive decoding, is a Gaussian of power $\delta$ at the transmit side (subject to scaling by channel gains). This also means the upper bound coincides with the actual solution. 

In summary, since each infinitesimal code-book is decoded in at least one of the two receivers, the random coding density for the pair should be the optimum one, which is a Gaussian of power $\delta$. This is possible since a Gaussian of power $\delta$ for all layers  automatically satisfies the consistency requirement, i.e., having the same random coding density for the two code-books forming a pair. 
\end{proof}

\subsection{Formation of Multiple-access Channels with Four Inputs}
Consider the 2-users GIC model in \ref{fig1} for the case   
 that both users have both private and public messages. Let us assume public and private messages of each user are independent (to be justified later). 
With independent $U_1$, $V_1$, $U_2$ and $V_2$, a multiple-access channel, $\overline{M\!AC_1}$ with four inputs 
is formed at $Y_1$, and a multiple-access channel, $\overline{M\!AC_2}$ with four  inputs is formed at $Y_2$.  
Let us define the corresponding optimization problem  $\mathcal{P}_0$ as follows
\begin{align} 
\label{L1}
\mbox{Maximize:}~~ & L_0= R_{U_1}+ R_{V_1} + \mu(R_{U_2}+ R_{V_2})  \\ 
\mbox{Subject to:}~~~ &   \nonumber \\ \label{L2}
\left(R_{U_1}, R_{V_1}, R_{U_2}, R^{(1)}_{V_2}\right)& ~~\in~~  \overline{M\!AC_1}\left(P^{(1)}_{U_1}, P^{(1)}_{V_1},P^{(1)}_{U_2},P^{(1)}_{V_2}\right)\\ \label{L3}
\left(R_{U_1}, R^{(2)}_{V_1}, R_{U_2}, R_{V_2}\right)& ~~\in~~  \overline{M\!AC_2}\left(P^{(2)}_{U_1}, P^{(2)}_{V_1},P^{(2)}_{U_2},P^{(2)}_{V_2}\right)\\
\mbox{where:}~~~ &   \nonumber \\ \label{L4}
P^{(1)}_{U_1}=P^{(2)}_{U_1}/b  & ~~\triangleq~~ P_{U_1}  \\  \label{L5}    
P^{(2)}_{U_2}=P^{(1)}_{U_2}/a & ~~\triangleq~~  P_{U_2}  \\ \label{L6}   
 P^{(1)}_{V_1}=P^{(2)}_{V_1}/b & ~~\triangleq~~   P_{V_1} \\  \label{L7}
P^{(2)}_{V_2}=P^{(1)}_{V_2}/a & ~~\triangleq~~  P_{V_2}   \\  \label{L8} 
P_{U_1} +P_{V_1} & ~~=~~    P_1  \\   \label{L9} 
P_{U_2} +P_{V_2} & ~~=~~   P_2 \\  \label{L10} 
 f^{(1)}_{U_1}=f^{(2)}_{U_1} & ~~\triangleq~~ f_{U_1}    \\  \label{L11}  
 f^{(1)}_{U_2}=f^{(2)}_{U_2}& ~~\triangleq~~ f_{U_2}   \\ \label{L12}   
 f^{(2)}_{V_1}\circeq f^{(1)}_{V_1} & ~~\triangleq~~ f_{V_1}   \\  \label{L13}
f^{(1)}_{V_2} \circeq   f^{(2)}_{V_2} & ~~\triangleq~~ f_{V_2}
\end{align}  
where $\triangleq$ means "denoted as", and $\circeq$ means the density functions will be the same once random variables are normalized by dividing with their respective standard deviations.  In \ref{L2} and \ref{L3}, MAC regions are expressed as a function of the power budget for their respective message set to emphasize the role of power allocation. Solution to \ref{L1} to \ref{L13} should satisfy some consistency requirement, in the sense that  power vectors related to $\overline{M\!AC_1}$ 
and $\overline{M\!AC_2}$  are linked to each other through  \ref{L4} to \ref{L7}, while satisfying \ref{L8} and \ref{L9}, and random coding density functions should be consistent in view of \ref{L10} to \ref{L13}.  

Note that 
\begin{align} \label{Dum1}
R^{(1)}_{V_2}& =I(\sqrt{a}V_2;Y_1|U_1,U_2,V_1) \\ \label{Dum2}
R^{(2)}_{V_1}& =I(\sqrt{b}V_1;Y_2|U_1,U_2,V_2). 
\end{align}
are simply mathematical expressions that could be used to form the expressions governing successive decoding. 

{\bf Dummy Private Messages:} Let us use refer to messages resulting in 
rates $R^{(1)}_{V_2}$ and $R^{(2)}_{V_1}$ in solving $\mathcal{P}_0$ as dummy\footnote{Called dummy since these are merely an expression in terms of power.} private messages. Dummy private messages capture the role of interference in 2-users GIC, and relates its solution to the solution of problem 
$\mathcal{P}_0$. Messages realizing  $R_{V_1}$ and $R_{V_2}$ are called non-redundant private messages.  

To further clarify the roles of dummy private messages and thereby the relationships between 2-users GIC and $\overline{M\!AC_1}$, $\overline{M\!AC_2}$, let us consider the optimum solution to 2-users GIC, and focus on interference terms at $Y_1$ and $Y_2$. These are random code-books that have been already labeled/selected indirectly through private messages. However, consider a hypothetical case in which one could provide an independent set of labels for elements of these code-books up a rate that would be decodable at their respective receiver. This would result in 4 rate values at each receiver. These 4-tuple rate values cannot fall outside  $\overline{M\!AC_1}$ (at $Y_1$) or outside $\overline{M\!AC_2}$ (at $Y_2$). 

Note that, in dealing with $\overline{M\!AC_1}$ and $\overline{M\!AC_2}$, rates of dummy private messages are simply an expression in terms of relevant power values. In 2-users GIC,  the actual code-book/rate of the dummy private message of user 1 is the same as the code-book/rate for the non-redundant private message of user 2 and vice versa. As we will see, formulation (handling non-redundant rates) for 2-users GIC is obtained by projecting $\overline{M\!AC_1}$ and $\overline{M\!AC_2}$ along the orthogonal span of the rate of their relevant dummy private message. This removes the role of the rate of dummy private messages  in the formulation relevant to 2-users GIC, i.e., solely accounts for their impacts in terms of an interference term from a lower level code-book in successive decoding of non-redundant messages. Consequently, dummy private messages (interference terms) do not need to be decodable in 2-users GIC.  $\blacksquare$ 

{\bf Dominant Multiple-Access Region:}
The optimum solution maximizing $L_0$ in \ref{L1}, 
 either falls on the sum-rate front of 
$\overline{M\!AC_1}$, which is equal to
\begin{equation}
0.5\log_2\left(1+(P1+aP_2)/\sigma^2\right),
\label{EQN1}
\end{equation}
and on the boundary of $\overline{M\!AC_2}$, or on the sum-rate front of 
$\overline{M\!AC_2}$, which is equal to
\begin{equation}
0.5\log_2\left(1+(bP1+P_2)/\sigma^2\right),
\label{EQN2}
\end{equation}
and on the boundary of $\overline{M\!AC_1}$.  The MAC with the optimizing 
sum-rate front is called the dominant MAC. Note that sum-rates in \ref{EQN1} and \ref{EQN2} include rates of dummy private messages.  $\blacksquare$

{\bf Nested Optimality of Private Messages}: 
Nested optimality condition states that, in solving \ref{L1} to \ref{L13} for a given 
$ \mu$, the power of private messages will be at a saturation level which depends on $\mu$. 
This means, beyond the point of saturation, increasing the power of public message(s) will result in a higher return in terms of the weighted sum-rate. 

In other words, if power values allocated to private messages are equal to 
$\hat{P}_1(\mu)$ and $\hat{P}_2(\mu)$, solving \ref{L1} to \ref{L13} for 
$P_1=\hat{P}_1(\mu)$, $P_2=\hat{P}_2(\mu)$ 
results in zero power to be allocated to public messages. Likewise, solving \ref{L1} to \ref{L13} for 
$P_1=\hat{P}_1(\mu)$ allocates zero power to the public message of user 1, and solving \ref{L1} to \ref{L13} for 
$P_2=\hat{P}_2(\mu)$ allocates zero power to the public message of user 2.

As indicated, saturation levels for private messages depend on $\mu$.
All discussions presented hereafter in conjunction with saturation levels concern the same value of $\mu$ determining the relevant saturation levels. 
 One conclusion is that, in solving \ref{L1} to \ref{L13}, public messages are formed above private messages and should be decoded first while considering private messages as noise. Later parts show that the dummy private messages are placed first. Following discussions are based on such a configuration. 

Let us consider the multiple-access region formed over redundant message of user 2 
and non-redundant message of user 1 at $Y_1$, called maC1, and the multiple-access region formed over redundant message of user 1
and non-redundant message of user 2 at $Y_2$, called maC2. 
The optimum solution to \ref{L1} to \ref{L13} 
maximizes $R_{V_1}$ over maC1, and $R_{V_2}$ over maC2. Noting nested optimality, we can focus on the optimality of maC1 and maC2 assuming $P_1=\hat{P}_1(\mu)$, $P_2=\hat{P}_2(\mu)$, i.e., set aside the roles of public messages in optimizing maC1 and maC2. We have the following theorem.
\begin{theorem}
\label{Th2}
Consider the optimum solution to maC1 and maC2 in conjunction with an AWGN channel. 
The optimum solution maximizes $R_{V_1}$ over maC1 and $R_{V_2}$ over maC2. 
The dummy private messages should have the smallest rate in their corresponding multiple-access region, and should be ranked first (decoded last in successive decoding), with non-redundant private messages positioned next.   
 In addition, if, in the final solution, the rates of redundant and non-redundant private messages,  in maC1 and maC2 (which are functions of $a$, $b$, $\mu$ only) are equal to $\breve{r}_1$, $\breve{r}_2$,  $\hat{r}_1$ and $\hat{r}_2$, respectively, then solving \ref{L1} to \ref{L13} subject to some of the following constraints
$$
R_{V_1}\neq \hat{r}_1~~R_{V_2}\neq \hat{r}_2~~R^{(1)}_{V_2}\neq \breve{r}_1~~
R^{(2)}_{V_1}\neq \breve{r}_2
$$
 reduces the value of $L_0$ in the optimum solution. 
\end{theorem}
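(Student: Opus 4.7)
The plan is to attack Theorem \ref{Th2} in three stages, each leveraging a different earlier ingredient: nested optimality for the reduction, the polymatroid corner-point structure (Lemma 3.2 of \cite{HK3} and the fairness ordering of \cite{HK4}) for the ordering claim, and Theorem \ref{NNN} combined with Pareto-minimality of the power allocation for the strict-inequality conclusion.

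First I would invoke nested optimality with $P_1=\hat{P}_1(\mu)$ and $P_2=\hat{P}_2(\mu)$, which zeros out the public messages and collapses the objective to $L_0=R_{V_1}+\mu R_{V_2}$. In this reduced setting maC1 is a two-input MAC containing only $R_{V_1}$ and the dummy $R^{(1)}_{V_2}$, with $R_{V_1}$ the sole rate carrying non-zero weight; an analogous statement holds at maC2 with $R_{V_2}$ and $R^{(2)}_{V_1}$. Maximization of $L_0$ therefore forces the solution to a boundary point of maC1 that maximizes $R_{V_1}$, and to a boundary point of maC2 that maximizes $R_{V_2}$, establishing the first clause.

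Next I would identify the specific corner point of each polymatroid. Lemma 3.2 of \cite{HK3} selects the corner at which the layer carrying the larger weight is paired with the larger marginal rate, while the fairness formulation of \cite{HK4} expresses the same corner in terms of a bottom-to-top ranking of layers by their rate values. Applied to maC1 this pins $R^{(1)}_{V_2}$ to the smaller rate and fixes the dummy's position in the decoding stack as stated, and analogously for maC2. Gaussianity of all layers, justified by Theorems \ref{NNN2} and \ref{GG}, then lets me write the four rates in closed form as functions of $(a,b,\mu)$ via the power-linkage equations \ref{L4} to \ref{L9}, yielding the explicit values $\hat{r}_1,\hat{r}_2,\breve{r}_1,\breve{r}_2$.

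For the final clause I would argue by contradiction. Suppose some optimum solution of \ref{L1} to \ref{L13} achieves the same optimum $L_0$ while violating at least one of the four equalities. By Theorem \ref{NNN}, the dummy rates carry zero direct weight in $L_0$ and can influence it only through an active-constraint relationship in which the non-redundant rates are monotonically increasing in the dummies. A forced deviation of $R^{(1)}_{V_2}$ from $\breve{r}_1$ (or $R^{(2)}_{V_1}$ from $\breve{r}_2$) must therefore propagate through \ref{M1} into either a reduction of $R_{V_1}$ or $R_{V_2}$, or demand an increase in power that the constraints \ref{L8} and \ref{L9} together with Pareto-minimality of the power allocation rule out. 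Either way $L_0$ strictly decreases, contradicting the assumed optimality.

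The main obstacle I expect is the second stage, specifically reconciling the two polymatroid corner points across maC1 and maC2 when the power allocations are coupled through \ref{L4} to \ref{L7}. Verifying that a single ordering of private layers is simultaneously consistent at both receivers, and that the resulting system of active MAC constraints is non-degenerate so Theorem \ref{NNN} applies in the required monotone form, probably demands either a KKT-style sensitivity analysis on the coupled rates or a careful reduction to the infinitesimal-layer picture of Theorem \ref{GG}. Once this consistency is established, the monotonicity feeding the third stage follows from differentiation of the Gaussian rate formulas.
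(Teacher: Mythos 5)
Your three-stage plan tracks the same skeleton as the paper's proof — zero weight for the dummy forces it to the smallest rate, the ordering fixes the corner, and stationarity/contradiction closes the argument — but Stage 2 rests on a claim that does not hold and is not what the paper actually uses. You assert that Lemma 3.2 of \cite{HK3} and the fairness ordering of \cite{HK4} ``express the same corner,'' and that applied to maC1 this simultaneously pins $R^{(1)}_{V_2}$ to the smaller rate and places the dummy first (decoded last). In a two-input MAC with weight vector $(0,1)$ on (dummy, non-redundant), Lemma 3.2 selects the corner where the \emph{non-redundant} message is decoded last (interference-free), which minimizes the dummy rate; whereas the fairness corner of \cite{HK4} places the smallest-rate user at the bottom, decoded last, which \emph{maximizes} the dummy rate. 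These are opposite corners, not the same one, and neither is derived purely from the weight structure of $L_0$ in the way you suggest. The paper instead fixes the ordering by \emph{elimination}: it enumerates all decoding orders in maC1 and maC2 and shows that the only one consistent with nested optimality, the consistency constraints \ref{L4}--\ref{L13}, and falling on both sum-rate fronts is ``dummy first, decoded last.'' The positive version of this argument, spelled out in the proof of Theorem \ref{Gau2} and implicit in Theorem \ref{NNN2}, is that the dummy's rate is frozen at the saturation threshold $\breve{r}_i$ (a function of $\mu,a,b$), and with that rate fixed the objective becomes maximizing the non-redundant rate, which requires realizing $\breve{r}_i$ with \emph{minimum} power — and that minimum is achieved precisely when the dummy is at the bottom of the stack. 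Your own closing caveat identifies exactly this obstacle (``reconciling the two polymatroid corner points across maC1 and maC2 when the power allocations are coupled''), so you should replace the Lemma 3.2 / fairness coincidence argument with either the paper's elimination reasoning or the Theorem \ref{NNN2}-based power-minimization reasoning; as written, Stage 2 would fail. Your Stage 3 contradiction via Theorem \ref{NNN} and Pareto-minimality is sound in spirit and is actually somewhat more explicit than the paper's terse ``stationary power'' remark, but it inherits the gap in Stage 2 since it assumes the ordering and monotone dependence have already been established.
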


\begin{proof}
Noting that the weight of redundant (dummy) private messages is zero in $L_0$ defined in \ref{L1}, which is the smallest among different weights forming $L_0$, it follows that the rates of each redundant private message should be the minimum in its respective multiple-access region.  Under this condition, let us consider all possible decoding orders in maC1 and maC2. It follows that the only possible decoding order which can satisfy nested optimality condition, satisfy consistency condition in \ref{L4} to \ref{L13} while achieving a point on the sum-rate front of both multiple-access regions, is the one that places dummy private messages first (decoded last in successive decoding). Achieved corner point in maC1 is  based having the non-redundant message of user 1 above the noise plus interference (caused by the dummy private message of user 2). Likewise, achieved corner point in maC2 is  based having the non-redundant message of user 2 above the noise plus interference (caused by the dummy private message of user 1). Both these corner points correspond to a stationary solution in terms of power values in the sense that changing (increasing or decreasing) the power of one of the private messages results in a reduction in the rate of the other private message.  
\end{proof}

In summary, in the optimum solution, the rates of redundant and non-redundant private messages in maC1 and maC2 are fixed (function of $\mu$ and gain values $a$ and $b$). Optimality in this case translates to achieving the designated rates with minimum power, while maintaining consistency in the sense of \ref{L4} to \ref{L13}. 
$\blacksquare$

The following Theorem, although not directly related to 2-users GIC, sheds some light on the reason for relying on Gaussian random code-books. 
\begin{theorem}
\label{Gau}
Consider a two-level superposition code with AWGN. The rates of code-books should be equal to $\tilde{r}_1$ and $\tilde{r}_2$. The optimum solution achieving the minimum power vector relies on Gaussian code-books for both layers. 
\end{theorem}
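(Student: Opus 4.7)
The plan is to convert each of the two successive-decoding rate constraints into a lower bound on the corresponding layer's power by invoking the maximum-entropy property of the Gaussian (the right-hand inequality in \ref{EEn1}) and then to read off from the equality conditions that both code-books must be Gaussian. I would process the decoding stack from the bottom upward: layer~1 is decoded last and, after cancellation of $X_2$, sees only $z\sim\mathcal{N}(0,1)$; layer~2 is decoded first and sees $X_1$ as interference plus $z$.

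For layer~1, the rate constraint $\tilde{r}_1 \leq H(X_1+z) - H(z)$ rearranges into $H(X_1+z)\geq \tilde{r}_1 + 0.5\log_2(2\pi e)$, while the maximum-entropy bound supplies $H(X_1+z)\leq 0.5\log_2\!\big(2\pi e (P_1+1)\big)$. Chaining the two yields $P_1\geq 2^{2\tilde{r}_1}-1$, with equality only when $X_1+z$, and therefore $X_1$, is Gaussian. Feeding this Gaussian $X_1$ back into the layer-2 constraint $\tilde{r}_2 \leq H(X_1+X_2+z)-H(X_1+z)$, I would apply the same maximum-entropy bound to $H(X_1+X_2+z)$ to obtain $P_2 \geq (P_1+1)(2^{2\tilde{r}_2}-1)$, with equality iff $X_1+X_2+z$, hence $X_1+X_2$, is Gaussian. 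Since $X_2$ is independent of $X_1$ by the superposition-code construction, Cram\'er's decomposition theorem (equivalently, the equality case of EPI used in reverse) then forces $X_2$ itself to be Gaussian.

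The one point I expect to require genuine care, rather than routine manipulation of \ref{EEn1}, is pinning down the phrase \emph{minimum power vector}, because the two components are coupled. I would argue that the layer-2 floor $(P_1+1)(2^{2\tilde{r}_2}-1)$ is strictly increasing in $P_1$, so that shrinking $P_1$ automatically relaxes the floor on $P_2$. This monotonicity makes the componentwise, lexicographic, and sum-power minima coincide at the unique point $(P_1,P_2)=(2^{2\tilde{r}_1}-1,\; 2^{2(\tilde{r}_1+\tilde{r}_2)}-2^{2\tilde{r}_1})$, eliminating any ambiguity in the statement. Once that monotonicity is nailed down, the equality analysis above identifies this unique minimum as the one achieved by independent Gaussian code-books for both layers, completing the proof.
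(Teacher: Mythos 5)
Your approach mirrors the paper's quite closely: the paper's proof also proceeds by (i) noting that Gaussian minimizes the power of the first layer (which sees noise only) and (ii) noting that Gaussian minimizes the aggregate power determined by the sum-rate $\tilde{r}_1+\tilde{r}_2$, then concluding the second layer's power is minimized. You make the same two moves but render the second one as the conditional bound $P_2\ge(P_1+1)(2^{2\tilde{r}_2}-1)$ and add the useful observation that Cram\'er's decomposition is what forces the individual code-books (not just their sum) to be Gaussian in the equality case --- a detail the paper leaves implicit. The paper also remarks that there are two solutions depending on which layer realizes $\tilde{r}_1$ versus $\tilde{r}_2$, which you silently fix by choosing one decoding order; that is fine.

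There is, however, one overstatement you should retract: the claim that the ``componentwise, lexicographic, and sum-power minima coincide.'' The floor $P_2\ge(P_1+1)(2^{2\tilde{r}_2}-1)$ you invoke is derived after fixing $X_1$ Gaussian, so the monotonicity-in-$P_1$ you then use only sweeps over the Gaussian family. Over all admissible densities the feasible set of power pairs is not down-closed in $P_2$: take any non-Gaussian $X_1$ with $N(X_1+z)=2^{2\tilde{r}_1}$ but $\mathrm{Var}(X_1)>2^{2\tilde{r}_1}-1$, so that $\tilde{r}_1$ is still met at strictly higher $P_1$. Because $X_1+z$ is then non-Gaussian noise of entropy power $2^{2\tilde{r}_1}$, the EPI lower bound $I(X_2;X_2+X_1+z)>\tfrac12\log_2\!\big(1+P_2/2^{2\tilde{r}_1}\big)$ holds strictly for Gaussian $X_2$, so $\tilde{r}_2$ is already met at some $P_2<2^{2\tilde{r}_1}(2^{2\tilde{r}_2}-1)$, i.e.\ strictly below the Gaussian value. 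So no componentwise minimum exists, and Gaussian does not minimize $P_2$ in that sense. What the argument --- yours and the paper's --- actually establishes is that Gaussian attains the lexicographic minimum (minimize $P_1$ first, then $P_2$) and, equivalently, the minimum sum-power point, which is the relevant notion here because layer~1 plays the role of the dummy private message whose power is to be minimized first. If you drop the word ``componentwise'' your proof is sound and matches the paper's.
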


\begin{proof}
 A two-layer Gaussian code-book result in minimizing the total power for a sum-rate of $\tilde{r}_1+\tilde{r}_2$. The power associated to the first code-book is minimum due to forming a Gaussian code-book subject to noise only. On the other hand, the sum of power values is determined by the sum-rate and is minimized using a Gaussian code-book of total power $\breve{P}$ satisfying $\tilde{r}_1+\tilde{r}_2=0.5\log_2(1+\breve{P})$. Noting the sum-power is minimized and the power of first layer is minimized as well, it follows that the power of second layer is minimized too. This problem has two solutions, depending on the order of code-books realizing rate $\tilde{r}_1$ and $\tilde{r}_2$. 
\end{proof}

\begin{theorem}
\label{Gau2}
Consider maC1 and maC2 with AWGN. The rates of code-books forming maC1 as well as maC2 are fixed (function of $\mu$ and channel gain values $a$ and $b$).  The optimum solution minimizing power for the designated rates rely on Gaussian code-books where the code-books corresponding to dummy private messages are ranked first.
\end{theorem}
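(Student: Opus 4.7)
The plan is to combine Theorem \ref{Th2} (which pins down the decoding order and the designated rate values) with Theorem \ref{Gau} (which handles a single two-level superposition code), and then to verify that the consistency conditions \ref{L4}--\ref{L13} are automatically respected once Gaussian alphabets are adopted. First, by Theorem \ref{Th2} the dummy private messages, carrying zero weight in $L_0$ and hence taking the smallest rate in their respective multiple-access region, must occupy the lowest position in each stack; the non-redundant private messages are placed on top. This reduces each of maC1 and maC2 to a two-layer ordered superposition code over AWGN $\mathcal{N}(0,1)$ with two prescribed rates: $\breve{r}_1$ (dummy) below and $\hat{r}_1$ above at $Y_1$, and $\breve{r}_2$ (dummy) below and $\hat{r}_2$ above at $Y_2$.

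Next I would invoke Theorem \ref{Gau} on maC1 and on maC2 separately. Viewed in isolation, each is a two-level superposition code with fixed rates, so Theorem \ref{Gau} certifies that Gaussian code-books in both layers jointly and componentwise minimize the power required to achieve those rates. Denote by $P_{V_1}^{\star}$, $P_{V_2}^{\star}$ the resulting transmit-side powers of the non-redundant messages; the Gaussianity at each transmitter then fixes the dummy-side powers to be $aP_{V_2}^{\star}$ at $Y_1$ and $bP_{V_1}^{\star}$ at $Y_2$, which is exactly the scaling prescribed by \ref{L4}--\ref{L7}.

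The main obstacle is to show that the two independently optimal solutions are mutually compatible under the consistency constraints. The density-level coupling \ref{L10}--\ref{L13} is handled by a direct observation: a Gaussian alphabet scaled by a deterministic channel gain remains Gaussian, so a Gaussian $f_{V_1}$ chosen at the transmitter of user 1 delivers the required Gaussian density at both $Y_1$ and $Y_2$, and symmetrically for $f_{V_2}$, $f_{U_1}$, $f_{U_2}$. The power-budget constraints \ref{L8}--\ref{L9} are then inherited from the nested-optimality assumption $P_1=\hat{P}_1(\mu)$, $P_2=\hat{P}_2(\mu)$ under which maC1 and maC2 are being analyzed, because the rates $\breve{r}_i,\hat{r}_i$ depend only on $\mu,a,b$ and, under Gaussian signaling, these in turn uniquely pin down transmit powers that sum to the available budget.

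Finally, to argue that no non-Gaussian choice can do better, I would close with a standard relaxation bound: drop \ref{L10}--\ref{L13} and allow independent densities in maC1 and maC2; Theorem \ref{Gau} then selects Gaussian code-books in each relaxed subproblem as the unique minimum-power solution. Since the Gaussian choice already lies in the feasible set of the coupled (unrelaxed) problem, the relaxed minimum is attained by the coupled problem, and hence Gaussian code-books with dummy private messages ranked first are optimal for the joint minimum-power problem, establishing the theorem.
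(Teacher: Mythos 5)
Your proof is correct and essentially reproduces the alternative route the paper itself flags in the last sentence of its proof (``The proof for Theorem~\ref{Gau2} could also follow from Theorem~\ref{Gau}''). The paper's primary argument is phrased differently: it fixes the total received power at each receiver ($\hat{P}_1(\mu)+a\hat{P}_2(\mu)$ and $\hat{P}_2(\mu)+b\hat{P}_1(\mu)$), casts the problem as maximizing the non-redundant rate subject to the dummy rate being fixed, and then appeals to Theorem~\ref{NNN2} (a max-rate statement), whereas you cast it directly as componentwise power minimization and invoke Theorem~\ref{Gau} (a min-power statement). Both routes are legitimate and the paper endorses both.

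One thing you gloss over is the point the paper singles out as the reason the two per-receiver minimizations do not conflict: ``minimizing the power of each dummy private message is equivalent to minimizing the power of its associated non-redundant private message since the power values are linked to each other through \ref{L6} and \ref{L7}.'' In your write-up you introduce $P_{V_1}^{\star}$ from the maC1 analysis and $P_{V_2}^{\star}$ from the maC2 analysis and then simply declare that the dummy-side powers at $Y_1$ and $Y_2$ equal $aP_{V_2}^{\star}$ and $bP_{V_1}^{\star}$; but the dummy power at $Y_1$ is an output of the maC1 minimization, not something inherited from maC2, so this identification is precisely what needs an argument. The paper's observation is what supplies it: the two objectives are \emph{aligned} (both reduce to minimizing the pair $(P_{V_1},P_{V_2})$), so the per-receiver Gaussian minimizers coincide and your relaxation step --- Gaussian is feasible for the coupled problem and attains the relaxed lower bound --- then closes. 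Without stating the alignment, the feasibility claim in your relaxation argument is asserted rather than shown.
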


\begin{proof}
According to Theorem~\ref{Th2}, dummy private messages are placed first at both $Y_1$ and $Y_2$. Regardless of the random coding density function, rate of these first layers (dummy private messages) will be an increasing function of their respective power levels. Saturation, which is the cause of nested optimality, occurs at a point that further increase in the power of one or both of private message(s) result(s) in a lower increase in the weighted sum-rate as compared to allocating any extra power, in one or both transmitter(s), to the respective public message(s). This means, saturation happens when the rate of one or both dummy private message(s) has/have reached to a respective threshold which is a function of  $\mu$ and channel gain values $a$ and $b$. Let us refer to these saturation rates as $\breve{r}_1$ and $\breve{r}_2$, respectively. Noting the total power at $Y_1$ and $Y_2$ are equal to $\hat{P}_1(\mu)+a\hat{P}_2(\mu)$ and
$\hat{P}_2(\mu)+b\hat{P}_1(\mu)$, respectively, the problem reduces to maximizing the rate of each non-redundant private message subject to the condition that the rate of its counterpart dummy private message is fixed. This situation corresponds to the conditions of Theorem \ref{NNN2} at $Y_1$ and $Y_2$, which states the optimum code-books will be independent Gaussian.  

The proof for Theorem \ref{Gau2} could also follow from Theorem~\ref{Gau}. Note that in maC1 and maC2, minimizing the power of each dummy private message is equivalent to minimizing the power of its associated non-redundant private message since the power values are linked to each other through \ref{L6} and \ref{L7}. 
\end{proof}

In Theorem~\ref{Gau2}, rate of dummy private messages are considered as bottlenecks governing the saturation conditions. A question may arise why the bottleneck is not directly related to the rates of non-redundant private messages. To explain the reason, let us consider adding an extra $\delta$-layer of power (see \cite{HK1} or Section~\ref{sec1.2} of the current report for definition) beyond the point of saturation. Without loss of generality, let us focus on user 1. As shown in Fig.~\ref{Fig-new}, the added $\delta$-layer becomes part of superposition code structure forming the non-redundant private message of user 1. The only distinction is that the rate of this added $\delta$-layer, unlike layers before it, is determined such that the layer can be decoded at both $Y_1$ and $Y_2$, i.e., it becomes part of the public message of user 1. However, in the formation of dummy private message of user 1 at $Y_2$, the added $\delta$-layer leaves the continuum of superimposed $\delta$-layers forming the dummy private message of user 1, and will be placed above the non-redundant private message of user 2. This entails the structure of dummy private message of user 1 at $Y_2$ is locked at the saturation level and its rate does not change beyond the point of saturation. According to Theorem 3 of \cite{HK1}, the rate of the added $\delta$-layer is governed by the structure of code-books at $Y_2$. As far as $Y_1$ is concerned, the rate of the added layer follows the smooth (infinitesimal) decrease in rates of subsequent $\delta$-layers stacked together to form the non-redundant private message of user 1 at $Y_1$, while the rate of added $\delta$-layer at $Y_2$ experiences a much larger decrease with respect to its previous layer placed on top of the redundant message of user 1 at $Y_2$.   This is in accordance with the rate of  non-redundant private message of user 1 to remain locked at $\breve{r}_1$ beyond the point of saturation.      

  \begin{figure}[h]
   \centering
   \includegraphics[width=0.7\textwidth]{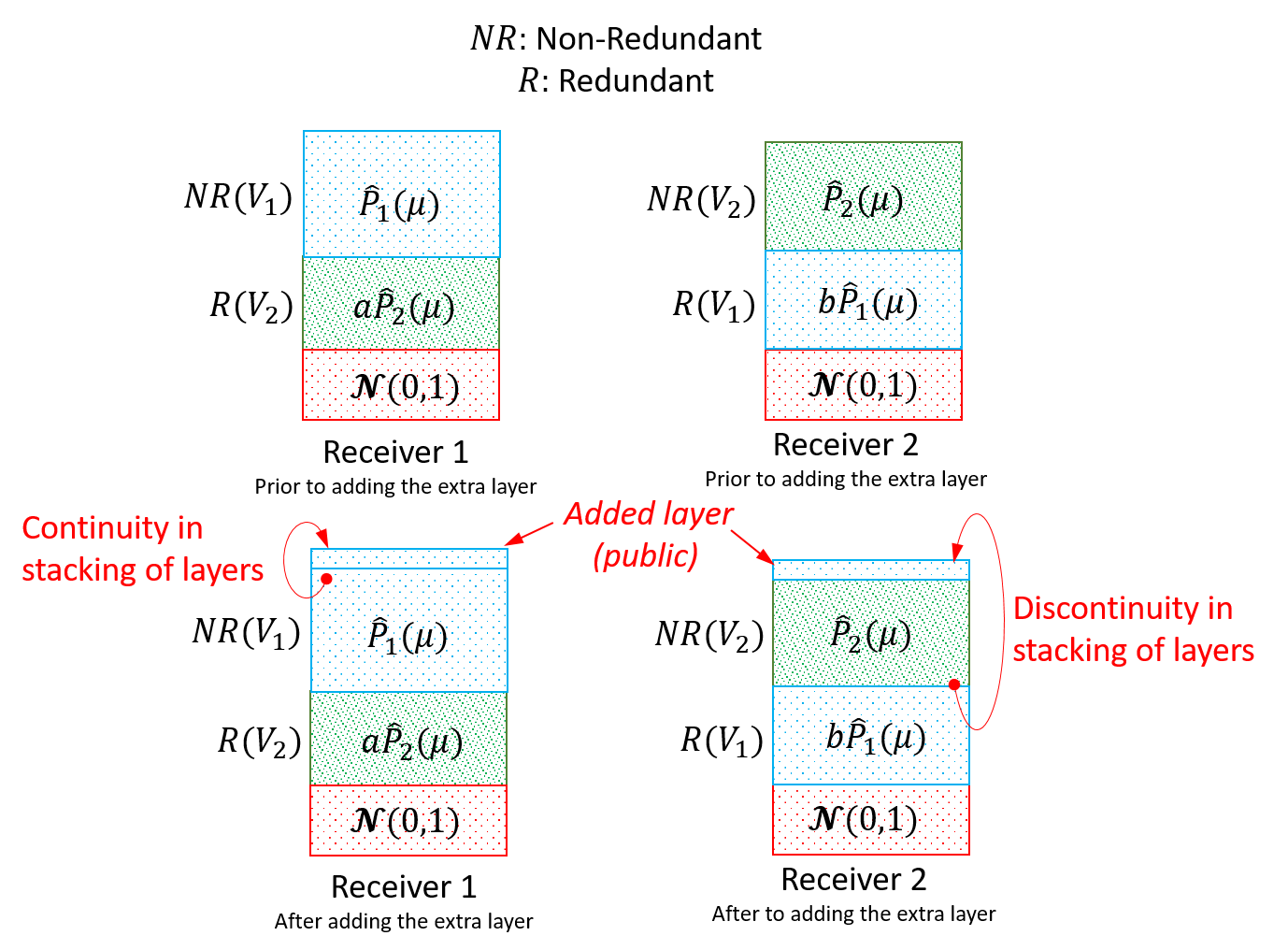}
   \caption{Conditions governing nested optimality of private messages, and addition of a $\delta$-layer beyond point of saturation. }
   \label{Fig-new}
 \end{figure}

In summary, nested optimality entails the optimum values for rates of redundant and non-redundant private messages, forming maC1 and maC2, should be realized with minimum power using Gaussian random code-books for private messages.

\begin{theorem}
\label{Pub}
Random coding density functions for public messages should be Gaussian. 
\end{theorem}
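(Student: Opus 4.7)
The plan is to reduce the present claim to the all-private analysis already carried out in Section~1.1 by first peeling off the private layers. I would begin by invoking the nested optimality of private messages together with Theorem~\ref{Gau2}: at the saturation power levels $\hat P_1(\mu), \hat P_2(\mu)$, the private code-books are chosen to be independent Gaussians and, by Theorem~\ref{Th2}, they are decoded last at each receiver. Consequently, once decoding is carried out top-to-bottom, the public messages $U_1, U_2$ at $Y_1$ see an additive noise equal to the AWGN plus the Gaussian private message of user~1 plus $\sqrt{a}$ times the Gaussian private message of user~2---jointly Gaussian and independent of $(U_1,U_2)$; the symmetric statement holds at $Y_2$. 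After this peeling step, the residual problem is exactly to maximize the weighted sum-rate $R_{U_1}+\mu R_{U_2}$ over the two multiple-access channels formed by $(U_1,\sqrt{a}\,U_2)$ at $Y_1$ and by $(\sqrt{b}\,U_1,U_2)$ at $Y_2$, subject to the residual power budgets $P_1-\hat P_1(\mu)$ and $P_2-\hat P_2(\mu)$, with purely Gaussian interference-plus-noise at each receiver.

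Next I would observe that this residual problem is structurally identical to the all-private setting treated in~\ref{La0}--\ref{La5} and~\ref{CCC1}--\ref{CCC2}, so the same machinery carries over verbatim. The entropy power inequality~\ref{En1}--\ref{En2} forces the densities of $U_1$ and $U_2$ to be Gaussian, because a Gaussian $U_1$ minimizes, via the lower bound in~\ref{EEn1}, the entropy of the interference it inflicts at $Y_2$, and once the induced power is fixed, simultaneously maximizes, via the upper bound in~\ref{EEn1}, the entropy of its own received sum at $Y_1$; the symmetric statement holds for $U_2$. Equivalently, the generalized version of Theorem~\ref{NNN2}, applied at $Y_1$ to the superposition consisting of $U_1$ stacked on top of the already-Gaussian private layers whose rate of user~1 is locked at the saturation value of Theorem~\ref{Gau2}, identifies Gaussian $U_1$ as the unique power-minimizing choice within the residual budget; the same holds at $Y_2$ for $U_2$. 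Since Gaussian requires the least power for a given entropy, the targeted entropy values for $U_1, U_2$ are indeed attainable within the residual power limits.

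The final and most delicate step is to verify that the relaxation in which the densities at $Y_1$ and at $Y_2$ may be chosen independently does not enlarge the optimum, i.e., the consistency conditions~\ref{L10} and~\ref{L11} are automatically satisfied. This is where I expect the main obstacle to lie, because the two MAC constraints are coupled through the cross-power relations~\ref{L4}--\ref{L7}, and a priori one might fear that the coupling forces a non-Gaussian compromise. The cleanest route around this is to appeal to the infinitesimal-layer viewpoint of Section~\ref{sec1.2} and Theorem~\ref{GG}: each $\delta$-layer contributing to $U_1$ (resp. $U_2$) is an independent Gaussian of transmit power $\delta$, and an independent continuum of such Gaussian $\delta$-layers aggregates to a Gaussian $U_1$ (resp. $U_2$) whose two copies at $Y_1$ and $Y_2$ differ only by the channel gain. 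This guarantees that the independently optimized Gaussian choices at the two receivers are in fact scaled copies of the same Gaussian random variable, so~\ref{L10} and~\ref{L11} are satisfied without any further constraint, the relaxation is tight, and the theorem follows.
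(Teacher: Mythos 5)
Your overall conclusion is correct and the opening peeling step matches the paper's implicit reasoning exactly: because the private layers are Gaussian (Theorems~\ref{Gau2}, \ref{Th2}) and decoded last, the public layers $U_1$, $U_2$ sit on a purely Gaussian noise floor at both $Y_1$ and $Y_2$. The paper's own proof is much shorter: it simply observes that the public code-books therefore form two Gaussian multiple-access channels Mac1 and Mac2, for which Gaussian random codes are optimum, and that each public layer can be viewed as a Gaussian layer stacked above its own user's private layer in a superposition code. Your invocation of (a generalization of) Theorem~\ref{NNN2} at the end of your second paragraph, and your use of the infinitesimal-layer viewpoint from Theorem~\ref{GG} to dispose of the consistency conditions \ref{L10}--\ref{L11}, are both consonant with that picture, just more detailed.

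However, the middle of your second paragraph contains a structural error that you should not let stand. You claim the residual public-only problem is ``structurally identical'' to the all-private analysis of \ref{La0}--\ref{La5} and \ref{CCC1}--\ref{CCC2}, and you then argue that a Gaussian $U_1$ is optimal because, via the lower bound in \ref{EEn1}, it minimizes the entropy of ``the interference it inflicts at $Y_2$.'' But $U_1$ is a \emph{public} message: it is fully decoded at $Y_2$ (constraints \ref{HK8p}--\ref{HK14p}), not left as residual interference. The all-private argument in Section~1.1 depended essentially on the fact that $L^{(1)}_1$ is \emph{never} decoded at $Y_1$, so that $H\bigl(L^{(1)}_1+z_1\bigr)$ appears only as a noise-entropy term to be minimized. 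That asymmetry is absent here: $U_1$ and $U_2$ are jointly decoded at both receivers, and the relevant constraints live in a genuine two-user MAC polytope at each receiver, not a point-to-point channel with fixed interference. Consequently the EPI-based ``minimize interference entropy'' argument does not transplant. What \emph{does} transplant is exactly the superposition/power-minimization reasoning of Theorem~\ref{NNN2} (the paper's actual route): $U_i$ sits as a layer above a fixed-entropy Gaussian floor, and Gaussian is what realizes the target rate with minimum power and maximum sum-entropy of the received signal. If you remove the EPI paragraph and rely solely on the Theorem~\ref{NNN2} and infinitesimal-layer arguments, your proof is correct and agrees with the paper in substance, while being considerably more explicit about the consistency conditions, which the paper handles only by a one-line appeal to the shared-rate-space requirement.
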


\begin{proof}
The proof follows noting public code-books see a Gaussian noise at both $Y_1$ and $Y_2$. Public messages form a multiple-access channel, called Mac1, at $Y_1$ and a multiple-access channel, called Mac2, at $Y_2$. Gaussian random code-books are optimum in both cases. Indeed, public code-book at $Y_1$ can be viewed as a Gaussian layer (on top of private code-book of user 1) in a superposition code encoding message of user 1. Likewise, public code-book at $Y_2$ can be viewed as a Gaussian layer (on top of private code-book of user 2) in a superposition code encoding message of user 2. However, rates of public layers are adjusted to be decodable at both receivers. This is needed to provide a space of shared rate values between $Y_1$ and $Y_2$ to be able to satisfy conditions over both $\overline{M\!AC_1}$ and $\overline{M\!AC_2}$ in \ref{L1} to \ref{L13}. 
\end{proof}

{\bf Example 1:} Consider the problem of designing a two-levels superposition code over an AWGN channel subject to an available power of $\breve{P}$. The objective is to maximize the rate of one of the layers, subject to having a rate of $r_1$ for the other layer. There are two solutions for this problem resulting in the same value of $S-r_1$ for the objective function: 
(i) layer of rate $r_1$ is first, and a layer of rate $S-r_1$ is second, where $S=0.5\log_2(1+P/\sigma^2)$ is the achievable sum-rate. 
(ii)  layer of rate $S-r_1$ is first, and layer of rate $r_1$ is second. 

In both solutions, the random coding density functions for the two layers are independent Gaussian. Either of these solutions may appear counter intuitive. In solution (i), a Gaussian code-book (that does not count in the objective function) is used in the first layer, creating the worst noise (Gaussian) for the second code-book above it, while this second code-book is the one  that determines the value of objective function. In solution (ii), second code-book is selected to be Gaussian. This maximizes the rate of the second code-book, while this rate does not count in the objective function.  
To explain these contradictions, in both cases,  Gaussian code-books play the important role of minimizing the power required for achieving their respective rates and thereby leave a higher (maximum) amount of power for the other code-book. It is a property of multiple access channel that saving in power due to using a Gaussian code-book is advantageous in comparison to  the harm caused by creating the worst (Gaussian) interference. 
More generally, in optimizing a weighted sum-rate over a multiple-access channel (or multi-level coding in a point-to-point setup), all layers, regardless of their weights (even if their respective weight is zero) should rely on  Gaussian random code-book. Of course, if the power constraints permit, zero power will be allocated to code-books that do not count (have zero weight) in the objective function. However, in some cases, including the problem arising from 2-users GIC, power values cannot be set to zero. In particular, the power of each dummy private messages is linked to the power of its relevant non-redundant private message through \ref{L6} and \ref{L7}. In this case, power values allocated to dummy private messages are minimized subject to problem constraints in \ref{L1} to \ref{L13}. As expressions in \ref{M1} are in terms of rates, the rates of dummy private messages is what contributes to the formation of the optimum solution. To further reduce the power of dummy private messages for a given rate, these are placed first which results in highest return in rate vs. allocated power. 

Returning to our example, consider solution (ii) and assume in the final solution, one assigns a rate $\breve{r}>r_1$ to the first layer without changing the power allocation. Obviously, the first layer cannot be decoded any longer. However, the second layer above it can still achieve its assigned rate of $S-r_1$ simply because it sees the first layer merely as interference (rate of the first layer does not come to the picture) and as long as the power does not change, the decoding can proceed. This is analogous what happens in 2-users GIC. We start by including dummy rate values which have a zero weight in the weighted sum-rate, and then conclude to decode the code-books which do count in the weighted sum-rate, decoding of these dummy messages is not needed.  More importantly, regardless of which code-books count in the weighted sum-rate and which code-books have zero weight, random coding density for all code-books should be independent Gaussian.$\blacksquare$.

\section{Optimum Solution to 2-users GIC}

Consider a 2-users GIC and rely on the concept of dummy private messages, which entails interfering terms can be independently encoded/decoded (subject to power constraints in \ref{L4} to \ref{L9}).  The weighted sum-rate achievable in such a relaxed 2-users GIC cannot be worse that the weighted sum-rate achievable in the original 2-users GIC. On the other hand, the rate-tuple achieved at each receiver in the relaxed 2-users GIC cannot be better than what can be achieved in a multiple-access channel which is optimally enlarged (using single letter, independent Gaussian code-books forming a superposition code at each transmitter). This means solution to \ref{L1} to \ref{L13} provides an upper bound on the solution to 2-users GIC. To explain this point differently, let us consider the optimum solution to 2-users GIC. Interference terms at each receiver can be assigned a rate value that renders them decodbale when messages above interference terms are decoded and removed. Including this dummy rates for interference terms result in a 4-tuple rate vector at $Y_1$ and another 4-tuple rate vector at $Y_2$ that cannot fall outside  $\overline{M\!AC}_1$ and $\overline{M\!AC}_2$, respectively.

Once dummy private messages are ranked first, if they are decodable or not, does not affect the rate-tuple achievable by code-books above them. The impacts of dummy private messages on code-books above them merely depend on their respective power. This property permits the rate-tuple achieved in 2-users GIC (over non-redundant messages, without decoding dummy private messages) to be the same as the rate-tuple achieved in the lower bound corresponding to HK rate region, and the upper bound based on the intersection of $\overline{M\!AC}_1$ and $\overline{M\!AC}_2$ in  \ref{L1} to \ref{L13}. Noting properties of multiple-access channel, any weighted sum-rate is optimized using single letter, independent Gaussian code-books forming a superposition code at each transmitter. 
 
Overall orderings of messages are as follows: 

\vspace{0.2cm}
\noindent
\begin{equation}
\label{T1}
\begin{small}
\overline{M\!AC}_1: \mbox{(1) Dummy Private Message of User 2, (2) Private Message of User 1, (3,4) Public Messages} 
\end{small}
\end{equation}
\begin{equation}
\label{T2}
\begin{small}
\overline{M\!AC}_2: \mbox{(1) Dummy Private Message of User 1, (2) Private Message of User 2, (3,4) Public Messages} 
\end{small}
\end{equation}

\noindent
Using this  ordering, the rate-tuple of non-redundant messages in the optimum solution to \ref{L1} to \ref{L13} (with or without decoding of dummy private messages) is  the same as the rate-tuple achieved in HK optimization problem defined in \ref{HK0p} to \ref{HK16p} (without decoding the dummy private messages), as well as the rate tuple achieved in 2-users GIC (without decoding the dummy private messages).

\section{A Closer Look at the Optimum Solution} \label{sec3}

This section helps in understating arguments discussed in earlier parts, and paves the way to understanding the structure of the final solution.  As will be discussed later, the formation of optimum solution to 2-users GIC involves three corner points in $\overline{M\!AC_1}$ and $\overline{M\!AC_2}$, two on the dominant MAC involved in time-sharing and one on non-dominant MAC. Theorem \ref{LL1} below clarifies that in a nested formation of Multiple-Access Polymatroids (MAPs), a corner point on a parent MAP corresponds to a corner point in each of its children. 

\begin{theorem}
\label{LL1}
Consider a Multiple-Access Polymatroid (MAP) composed of $M$ messages ordered from bottom to top indexed by $1, 2,\ldots, M$. Consider integer 
$1\leq \ell \leq M$. Two MAPs are identified, 
one including messages $\ell+1,\ldots,M$ by considering interference from $1,\ldots,\ell$ as noise, and the other one composed of messages $1,\ldots,\ell$ when messages $\ell+1,\ldots,M$ are decoded and removed. The former is denoted as Map and the latter as maP. We have: (i) In the sense of iff,  a corner point (on the sum-rate front) of MAP corresponds to a corner point in Map and a corner point in maP. (ii) Map is equal to the projection of MAP on the orthogonal span of rate tuple  corresponding to maP, i.e., Map is lifted to MAP by including the missing rate-tuples from maP. 
\end{theorem}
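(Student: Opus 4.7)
The plan is to exploit the chain-rule identity
\begin{equation*}
I(Y;X_1,\ldots,X_M)=I(Y;X_1,\ldots,X_\ell)+I(Y;X_{\ell+1},\ldots,X_M\mid X_1,\ldots,X_\ell),
\end{equation*}
which, since MAP, Map and maP are each polymatroids with rank function of the form $f(S)=I(Y;X_S\mid X_{S^c})$ (Map absorbing $X_1,\ldots,X_\ell$ into its noise; maP operating after the top group has been decoded and cancelled), gives the workhorse identities $f_{\mathrm{MAP}}(\{1,\ldots,M\})=f_{\mathrm{maP}}(\{1,\ldots,\ell\})+f_{\mathrm{Map}}(\{\ell+1,\ldots,M\})$ and, more generally, $f_{\mathrm{Map}}(S_t)=f_{\mathrm{MAP}}(S_t\cup\{1,\ldots,\ell\})-f_{\mathrm{MAP}}(\{1,\ldots,\ell\})$ for every $S_t\subseteq\{\ell+1,\ldots,M\}$, together with the analogous $f_{\mathrm{maP}}(S_b)=f_{\mathrm{MAP}}(S_b)$ for $S_b\subseteq\{1,\ldots,\ell\}$.

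For part (i), a corner point of MAP on the sum-rate facet is produced by the polymatroid greedy algorithm on some ordering of the $M$ messages, giving $r_i$ as the $f_{\mathrm{MAP}}$-increment of $i$ relative to the messages strictly above it. The nested physical setting restricts attention to orderings that place $\{1,\ldots,\ell\}$ strictly below $\{\ell+1,\ldots,M\}$; split such an ordering into $\sigma_t$ on top and $\sigma_b$ on bottom. For $i>\ell$, the set strictly above $i$ contains no bottom element, so the $f_{\mathrm{MAP}}$-increment rewrites via the identity above as the Map greedy increment under $\sigma_t$; for $i\leq\ell$, the set strictly above $i$ contains all of the top group, so the $f_{\mathrm{MAP}}$-increment collapses to the maP greedy increment under $\sigma_b$. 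This gives the forward direction. The converse takes greedy corners of Map (under $\sigma_t$) and of maP (under $\sigma_b$), concatenates the two orderings into a nested ordering of MAP, and retraces the same decomposition to confirm that the combined rate tuple is exactly the MAP greedy corner for this ordering.

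For part (ii), fix $(r_1,\ldots,r_\ell)$ at the maP corner obtained in (i) and show that the slice of MAP at this fixed bottom vector coincides with Map. The Map-into-slice inclusion is verified by checking each of the $2^M-1$ MAP constraints: writing $S=S_b\cup S_t$, combine $\sum_{i\in S_b}r_i\leq f_{\mathrm{maP}}(S_b)=f_{\mathrm{MAP}}(S_b)$ with the submodularity consequence $f_{\mathrm{Map}}(S_t)\leq f_{\mathrm{MAP}}(S)-f_{\mathrm{MAP}}(S_b)$, itself obtained by applying submodularity of $f_{\mathrm{MAP}}$ to the pair $S_b\cup S_t$ and $S_t\cup\{1,\ldots,\ell\}$. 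The reverse inclusion uses the MAP constraint on $S_t\cup\{1,\ldots,\ell\}$: because the maP corner saturates $\sum_{i=1}^\ell r_i=f_{\mathrm{maP}}(\{1,\ldots,\ell\})$, this constraint collapses exactly to $\sum_{i\in S_t}r_i\leq f_{\mathrm{Map}}(S_t)$, which is the Map constraint. The main obstacle is the bookkeeping of this last reduction, specifically showing that the mixed subset constraints of MAP either coincide with Map constraints after cancellation by the saturated bottom rates or are implied by the key submodular inequality $f_{\mathrm{maP}}(S_b)+f_{\mathrm{Map}}(S_t)\leq f_{\mathrm{MAP}}(S_b\cup S_t)$; once this inequality and its saturation-at-the-corner counterpart are in place, both parts follow mechanically.
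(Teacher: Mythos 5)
Your proposal follows the same underlying idea as the paper's two-sentence sketch (``expressions for Map are a subset of expressions defining MAP''/``direct replacement''), but you actually supply the machinery that the paper leaves implicit: the chain-rule identities linking $f_{\mathrm{MAP}}$, $f_{\mathrm{Map}}$ and $f_{\mathrm{maP}}$; the greedy characterization of corner points for part (i), with the (necessary but unstated in the theorem) restriction to orderings that keep the bottom group $\{1,\ldots,\ell\}$ strictly below the top group; submodularity for the ``Map $\subseteq$ slice'' direction; and the saturation $\sum_{i=1}^{\ell} r_i = f_{\mathrm{MAP}}(\{1,\ldots,\ell\})$ at the maP corner for the ``slice $\subseteq$ Map'' direction. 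The paper's terse claim that Map's inequalities are literally a subset of MAP's is not quite right as written (the right-hand sides differ because Map treats the bottom group as noise), and your chain-rule/submodularity derivation is the correct way to make the replacement argument rigorous, so your write-up is stronger than the paper's.

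One small slip: to obtain the key inequality $f_{\mathrm{Map}}(S_t)\le f_{\mathrm{MAP}}(S_b\cup S_t)-f_{\mathrm{MAP}}(S_b)$, the submodularity of $f_{\mathrm{MAP}}$ should be applied to the pair $A=S_b\cup S_t$ and $B=\{1,\ldots,\ell\}$ (so that $A\cup B=S_t\cup\{1,\ldots,\ell\}$ and $A\cap B=S_b$), not to $S_b\cup S_t$ and $S_t\cup\{1,\ldots,\ell\}$; the latter pair is nested ($S_b\cup S_t\subseteq S_t\cup\{1,\ldots,\ell\}$), so submodularity gives only a tautology. With that pair corrected, the step goes through exactly as you intend, and the rest of the argument is sound.
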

\begin{proof}
Proof follows noting that expressions for Map are a subset of expressions defining MAP. Direct replacement shows that region Map is mapped to a corner region in MAP by including missing rate components by rate-tuples from maP. 
 \end{proof}

Theorem \ref{Th1} below is a different way of expressing the fact that public messages are formed above private messages. In the language of nested optimality, public messages are formed when private messages have reached their saturation level (which depends on the value of $\mu$, and channel gains, i.e., $a$ and $b$).
  
\begin{theorem} \label{Th1}
Consider a 2-users GIC (see Fig.~\ref{fig1}) realizing a rate-tuple $(R_{U_1},  R_{U_2}, R_{V_1}, R_{V_2})$. In realizing $(R_{U_1},  R_{U_2}, R_{V_1})$ at $Y_1$, $V_2$ is considered as noise, and likewise, in realizing $(R_{U_1},  R_{U_2}, R_{V_2})$ at $Y_2$, $V_1$ is considered as noise. 
\end{theorem}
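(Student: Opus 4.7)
The plan is to derive Theorem \ref{Th1} as a direct consequence of two facts already in hand: the nested optimality of private messages (Theorems \ref{Th2} and \ref{Gau2}), which forces the dummy private message to sit at the bottom of the decoding stack at each receiver, and the corner-point lifting result of Theorem \ref{LL1}, which lets me identify the three-input MAC formed by treating one private code-book as noise with the corresponding facet of the four-input MAC $\overline{M\!AC_1}$ or $\overline{M\!AC_2}$.

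First I would identify the interference contribution of $V_2$ at $Y_1$ (scaled by $\sqrt{a}$) as the dummy private message of user 2 in the formulation \ref{L1}--\ref{L13}, and symmetrically $V_1$ at $Y_2$ as the dummy private message of user 1. Theorem \ref{Th2} then places these dummy private messages first in their respective decoding orders (equivalently, decoded last), because they carry zero weight in $L_0$ and therefore must take the smallest rate on their MAC, and because any other ordering would violate both nested optimality and the consistency conditions \ref{L4}--\ref{L13}. As an immediate consequence, every code-book decoded above the bottom layer sees that layer purely as additive interference, i.e., as noise.

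Next I would apply Theorem \ref{LL1} with MAP $=\overline{M\!AC_1}$, $\ell=1$, and the single-input maP over $R^{(1)}_{V_2}$ at the bottom. The statement gives Map equal to the three-input MAC over $(R_{U_1},R_{U_2},R_{V_1})$ in which $\sqrt{a}V_2$ is absorbed into the additive noise, and tells me that the corner point of $\overline{M\!AC_1}$ selected by the optimization of $L_0$ projects to the matching corner of Map. So $(R_{U_1},R_{U_2},R_{V_1})$ is achieved at $Y_1$ exactly on the three-input MAC with $V_2$ treated as noise, and the analogous statement holds at $Y_2$ by symmetry.

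The main obstacle is verifying that the corner of $\overline{M\!AC_1}$ selected by the optimization really does put the dummy private at the bottom and the two public messages at the top; this is precisely the content of nested optimality as developed in Section \ref{sec1.2}. Any competing order would either force $V_2$ to be decoded at $Y_1$, turning it into a component of a public message and contradicting the optimal power split between public and private established under nested optimality, or violate the rate-ordering rule of Lemma 3.2 of \cite{HK3} that stacks higher-weight messages above lower-weight ones in a weighted sum-rate optimization over a polymatroid. Once this ordering is pinned down, the theorem is immediate, and its content coincides with the recurring assertion of the paper that public messages are formed above private messages.
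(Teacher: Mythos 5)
Your proposal takes a fundamentally different and, I believe, problematic route from the paper's own proof, and the differences matter.

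The paper's proof is a short relabeling argument that does not invoke any of the optimality machinery: suppose some portion $\hat{V}_2$ of $V_2$ were decoded at $Y_1$; since $V_2$ is by construction fully decoded at $Y_2$, that portion is decoded at both receivers, which is exactly the defining property of a public message. So $\hat{V}_2$ can be relabeled as part of $U_2$ (with the power split readjusted accordingly), and after this relabeling the remaining private $V_2$ acts purely as noise at $Y_1$. This is essentially a ``without loss of generality'' statement about the public/private decomposition itself, valid for \emph{any} achievable rate-tuple under the HK-type encoding, not just the optimum.

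Your proof, by contrast, routes everything through Theorems \ref{Th2}, \ref{Gau2}, and \ref{LL1}, all of which are statements about the \emph{optimum} solution of the weighted-sum-rate problem. This creates two genuine problems. First, it narrows the scope: Theorem \ref{Th1} as stated is about a GIC ``realizing a rate-tuple,'' with no optimality hypothesis, so a proof confined to the optimal operating point does not establish the claim as written. Second, and more seriously, it risks circularity: the paper explicitly says in Section \ref{sec1.2} that the ordering placing dummy private messages at the bottom ``can be explained relying on Theorem \ref{Th1},'' and Theorem \ref{Th2}'s proof appeals to nested optimality and that same ordering. If you use Theorem \ref{Th2} to derive Theorem \ref{Th1}, you are leaning on a result whose justification in the paper already presupposes Theorem \ref{Th1}. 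The paper's own one-line definitional argument is precisely what breaks that cycle; your proposal reinstates it.
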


\begin{proof}
Without loss of generality, let us focus on $Y_1$, and let us assume part of message $V_2$, denoted as $\hat{V}_2$, is decoded at $Y_1$. Noting that $V_2$ is entirely decoded at $Y_2$, it turns out that $\hat{V}_2$ is decoded at both $Y_1$ and $Y_2$, and consequently, it will be a public message. By moving $\hat{V}_2$ from private to public (and accordingly readjusting power allocation), the private message of user 2 reduces until it entirely acts as noise in $Y_1$.   
\end{proof}

Another question concerns the reason for having a private message and a public message for each user in 2-users GIC, which has motivated the formulation in \ref{L1} to \ref{L13}. Having superposition coding at $X_1$ and $X_2$, equipped with power allocation between layers,  
does not contradict optimality. Rates of public and private messages are adjusted to enable simultaneously satisfying conditions associated with multiple-access reception at $Y_1$ and $Y_2$
(also see~\cite{HK1}). 

\subsection{Position of Optimum Solution on the Sum-rate Front of Dominant MAC}

It is also known that formulation in \ref{HK0p} to \ref{HK16p}, corresponding to HK rate region, results in an achievable solution (lower bound) to 2-users GIC. Upon projecting  $\overline{M\!AC_1}$ and  $\overline{M\!AC_2}$ along the orthogonal span of their respective dummy private message, one obtains MAC1 (defined in HK constraints \ref{HK1p} to \ref{HK7p})  and MAC2 (defined in HK constraints \ref{HK8p} to \ref{HK14p}). The optimum solution to  \ref{HK0p} to \ref{HK16p} falls on the sum-rate of MAC1 and/or on the sum-rate front of MAC2. Otherwise, the solution would be within both regions, MAC1 and MAC2, and could be improved by increasing some of its (non-redundant) rate values. Let us assume the optimum solution falls on the sum-rate front of MAC1, called dominant. It is easy to see that the solution corresponds to a corner point in MAC2 (called non-dominant) once one of its corresponding public rates is reduced to bring the solution to a point on the sum-rate front of MAC1. Any point on the sum-rate front of MAC1 can be realized by time-sharing between two corner points, called time-sharing corner points hereafter. The optimum solution, falling on the sum-rate front of MAC1 (dominant) by interpolation between time-sharing  corner points, is lifted to a point on the sum-rate front of $\overline{M\!AC_1}$ by including the rate of the dummy private message of user 2.  Likewise, the solution on the boundary  of MAC2  (non-dominant) is lifted to a point  on the boundary of $\overline{M\!AC_2}$. This boundary point is  obtained by reducing the rate of a public message with respect to a corner point of $\overline{M\!AC_2}$. 

Rates of public messages are the only rate values that are shared between $\overline{M\!AC_1}$/MAC1/Mac1 and  $\overline{M\!AC_2}$/MAC2/Mac2. As a result, satisfying both constraints (in \ref{L2} and \ref{L3}) requires adjusting the rates of public messages, including: (i) time-sharing between two corner points in the dominant multiple-access region defined over public messages, and reducing the rate of one of public messages in non-dominant multiple-access region defined over public messages  such that the final solution falls on the sum-rate front of the dominant multiple-access region.  Noting Theorem~\ref{LL1}, a corner point in $\overline{M\!AC_1}$ corresponds to a corner point in MAC1 (region of non-redundant messages at $Y_1$), 
a corner point in
Mac1 (region of public messages at $Y_1$), and  a corner point in
maC1 (region of private messages at $Y_1$). Likewise, a corner point in $\overline{M\!AC_2}$ corresponds to a corner point in MAC2 (region of non-redundant messages at $Y_2$), 
a corner point in Mac2 (region of public messages at $Y_2$), and  a corner point in
maC2 (region of private messages at $Y_2$). As we will be discussed in Section \ref{sec3}, at most three corner points will contribute to the final solution, two on the dominant MAC which will be involved in time-sharing and one on the non-dominant MAC. The optimum solution always falls on corner points of maC1 and maC2, but, assuming $\overline{M\!AC_1}$ is dominant,  it falls on the sum-rate front of $\overline{M\!AC_1}$/MAC1/Mac1. Note that region of public messages is the only region shared between $\overline{M\!AC_1}$  and $\overline{M\!AC_2}$, and time-sharing over it enables realizing a point that satisfies both constraints  (related to $\overline{M\!AC_1}$  and $\overline{M\!AC_2}$), while being optimized through power allocation.
 
\subsection{Lack of Ability to Decoded  Interference Terms - Impact on Optimizing Non-redundant Rate Values}

In the case of 2-users GIC,  the code-books forming interference terms are the same as the code-books forming non-redundant private messages. These changes in code-books/rates with respect to dummy messages in $\mathcal{P}_0$ do not affect the solution in maximizing the weighted sum-rate in 2-users GIC. Note that in successive decoding, the code-books corresponding to non-redundant messages are above the interference terms, and consequently, see the interference as additional noise. The important point is that the power of interference terms are minimized, providing the means to maximize the weighted sum-rate over non-redundant messages. This point establishes the link between the optimum solution to 2-users GIC and the optimum solution to problem $\mathcal{P}_0$ in \ref{L1} to \ref{L13}. 

The final conclusion is that, in the optimum solutions explained above, the lower bound (HK rate region in  \ref{HK0p} to \ref{HK16p}) and the upper bound (problem $\mathcal{P}_0$ defined in \ref{L1} to \ref{L13}) coincide in the space spanned by non-redundant rate values, concluding that both result in the optimum solution to 2-users GIC. 

\section{Relationship to HK Rate Region}
To derive the HK constraints using notations widely used in the literature,  one should refer to expressions 3.2 to 3.15 on page 51 of~\cite{HK5}, and apply the following changes,
($\mbox{current~article} \leftrightarrow$~\cite{HK5}):
$U_1 \leftrightarrow W_1$, 
$U_2 \leftrightarrow W_2$,
$V_1 \leftrightarrow U_1$, 
$V_2 \leftrightarrow U_2$,
$R_{U_1}\leftrightarrow T_1$, $R_{U_2}\leftrightarrow T_2$, 
$R_{V_1}\leftrightarrow S_1$, $R_{V_2}\leftrightarrow S_2$. Applying these changes, the expanded Han-Kobayashi constraints  and the associated optimization problem  are expressed as follows

\begin{align} 
\label{HK0p}
\mbox{Maximize:}~~~~~ & R_{ws}= R_1+\mu R_2=R_{U_1}+ R_{V_1} + \mu(R_{U_2}+ R_{V_2})  \\
\mbox{Subject to:}~~~ &   \nonumber \\ \label{HK1p}
 R_{U_1}    & ~~{\le}~~   I(U_1;Y_1|U_2,V_1)     \\ \label{HK2p}
R_{U_2}   & ~~{\le}~~    I(U_2;Y_1|U_1,V_1)    \\ \label{HK3p}
R_{V_1}  & ~~{\le}~~  I(V_1;Y_1|U_1,U_2)   \\ \label{HK4p}
R_{U_1}+R_{U_2}   & ~~{\le}~~ I(U_1,U_2;Y_1|V_1)    \\  \label{HK5p}
R_{U_1}+R_{V_1}  & ~~{\le}~~  I(U_1,V_1;Y_1|U_2)    \\ \label{HK6p}
R_{U_2}+R_{V_1}   & ~~{\le}~~  I(U_2,V_1;Y_1|U_1)   \\ \label{HK7p} 
R_{U_1}+R_{U_2}+ R_{V_1}  & ~~{\le}~~   I(U_1,U_2,V_1;Y_1)    \\ \nonumber 
------- & ---------  \\  \label{HK8p}
 R_{U_1}  & ~~{\le}~~   I(U_1;Y_2|U_2,V_2)    \\ \label{HK9p}
 R_{U_2}   & ~~{\le}~~   I(U_2;Y_2|U_1,V_2)    \\  \label{HK10p}
 R_{V_2}  & ~~{\le}~~   I(V_2;Y_2|U_1,U_2)    \\ \label{HK11p}  
 R_{U_1}+R_{U_2}   & ~~{\le}~~   I(U_1,U_2;Y_2|V_2)  \\  \label{HK12p}
 R_{U_2}+R_{V_2}   & ~~{\le}~~   I(U_2,V_2;Y_2|U_1)   \\ \label{HK13p}
R_{U_1}+R_{V_2}    & ~~{\le}~~   I(U_1,V_2;Y_2|U_2)   \\  \label{HK14p}
 R_{U_1}+R_{U_2} + R_{V_2}    & ~~{\le}~~   I(U_1,U_2,V_2;Y_2) \\ \label{HK15p}
E(U_1^2)+E(V_1^2)& ~~=~~   P_1  \\  \label{HK16p}
E(U_2^2)+E(V_2^2) & ~~=~~   P_2.
\end{align}

To establish the connection between solution to problem $\mathcal{P}_0$ in \ref{L1} to \ref{L13} to HK optimization problem in \ref{HK0p} to \ref{HK16p}, without loss of generality, let us focus on constraints for $\overline{M\!AC_1}$. 
It includes fifteen constraints in terms of four rate values 
$(R_{U_1}, R_{V_1}, R_{U_2}, R^{(1)}_{V_2})$. A subset of constraints are as follows
 \begin{align} 
\label{HK1z}
 R_{U_1}    & ~~{\le}~~   I(U_1;Y_1|U_2,V_1,V_2)     \\ \label{HK2z}
R_{U_2}   & ~~{\le}~~    I(U_2;Y_1|U_1,V_1,V_2)    \\ \label{HK3z}
R_{V_1}  & ~~{\le}~~  I(V_1;Y_1|U_1,U_2,V_2)   \\ \label{HK4z}
R_{U_1}+R_{U_2}   & ~~{\le}~~ I(U_1,U_2;Y_1|V_1,V_2)    \\  \label{HK5z}
R_{U_1}+R_{V_1}  & ~~{\le}~~  I(U_1,V_1;Y_1|U_2,V_2)    \\ \label{HK6z}
R_{U_2}+R_{V_1}   & ~~{\le}~~  I(U_2,V_1;Y_1|U_1,V_2)   \\ \label{HK7z} 
R_{U_1}+R_{U_2}+ R_{V_1}  & ~~{\le}~~   I(U_1,U_2,V_1;Y_1|V_2). 
\end{align}
Constraints \ref{HK1z} to  \ref{HK7z} above, corresponding to constraints \ref{HK1p} to  \ref{HK7p} in HK constraints, define the multiple access region corresponding to treating the dummy private message of user 1 as well as the non-redundant private messages of user 2 as noise at $Y_1$. A similar set of equations can be obtained in conjunction with $\overline{M\!AC_2}$ coinciding with constraints \ref{HK8p} to  \ref{HK14p} in HK constraints, forming the multiple access region corresponding to treating the dummy private message of user 2 as well as the non-redundant private messages of user 1 as noise at $Y_2$

  \begin{figure}[h]
   \centering
   \includegraphics[width=0.7\textwidth]{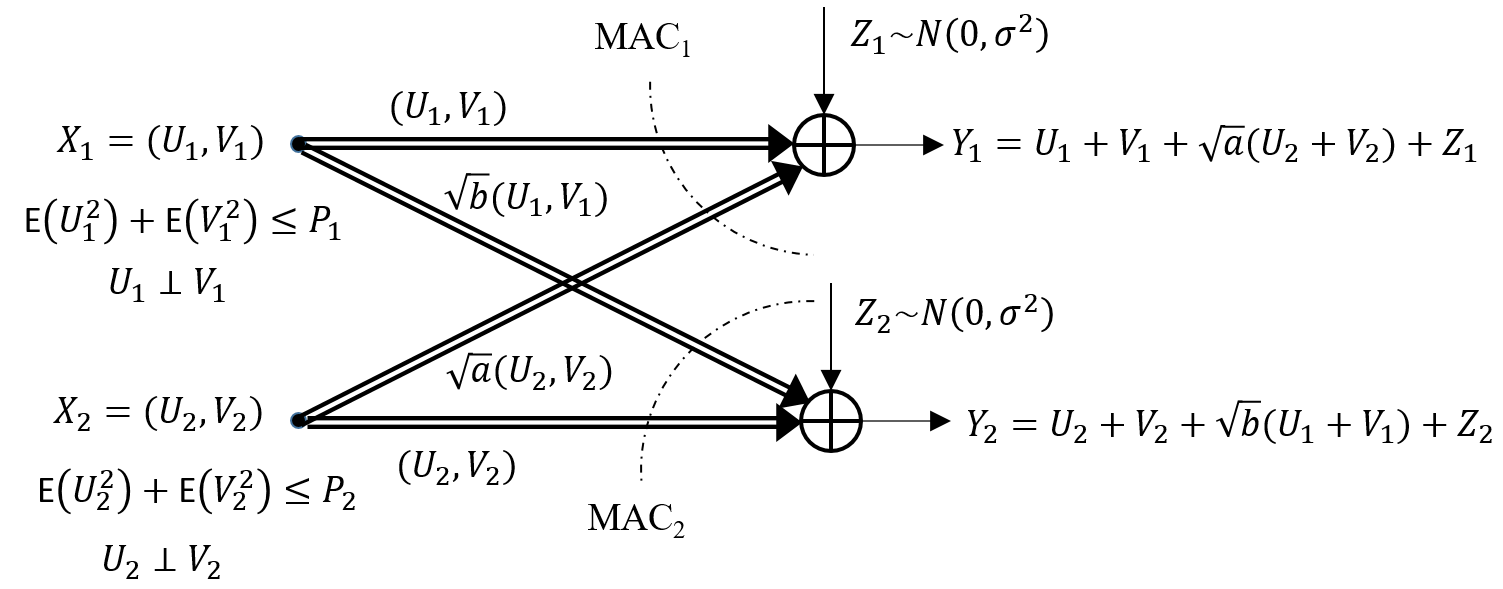}
   \caption{Model for 2-users Gaussian Interference Channel (GIC). }
   \label{fig1}
 \end{figure}

\subsection{Projection of a MAC Region on its Sub-spaces}

Consider a memory-less multiple-access channel where the mutual information terms on the right hand sides of corresponding constraints are some function of a power vector $\mathbf{p}$. In the following, the ploymatroid corresponding to capacity region of such a multiple access channel is simply called MAP, sanding for Multiple-Access Polymatroid. 
 
Let us consider a MAP with $M$ inputs, and focus on one of its rate values, say $r_1\in[\min r_1~\mbox{to}~\max r_1]$. For a given 
power vector, $\mathbf{p}$, value of $r_1$ depends on its order in successive decoding. $r_1$ is maximized when its corresponding code-book is placed first (decoded last) in successive decoding, and it is minimized when its corresponding code-book is placed last (decoded first) in successive decoding. Setting $r_1=\max r_1$ results in a MAP, $\mathsf{P}_{\max}$, defined over the space orthogonal to $r_1$, shifted up to position of $\max r_1$ along the $r_1$ axis. The constraints  for $\mathsf{P}_{\max}$ are a subset of constraints defining the original MAP. These constraints in the original MAP correspond to placing the $r_1$ code-book first, decoding all code-books above the $r_1$ code-book, while treating the interference from lower code-books (including $r_1$ code-book) as noise. $r_1$ plays the role of dummy private message in $\overline{M\!AC_1}$ and $\overline{M\!AC_2}$.

Likewise,  setting $r_1=\min r_1$ results in a MAP, $\mathsf{P}_{\min}$, 
defined over the space orthogonal to $r_1$, shifted up to position of $\min r_1$ along the $r_1$ axis.
 
Depending on the order of code-book corresponding to $r_1$ in successive decoding, one obtains $M$ cuts (cuts are orthogonal to $r_1$-axis) in the original MAP region.  
The projections of these cuts on the sub-space orthogonal to $r_1$ results in a nested set of regions bounded within an interior MAP and an exterior MAP. Interior MAP corresponds to $\max r_1$ and exterior MAP corresponds to $\min r_1$.  
Achieving capacity in a 2-users interference channel, formed by a MAP at $Y_1$ and a MAP at $Y_2$, narrows down to optimally enlarging the interior core formed when 
$r_1$ maximizes the minimum rate in the original MAP.    
Figure~\ref{fig2} provides an example. 
  \begin{figure}[htbp]
   \centering
   \includegraphics[width=0.5\textwidth]{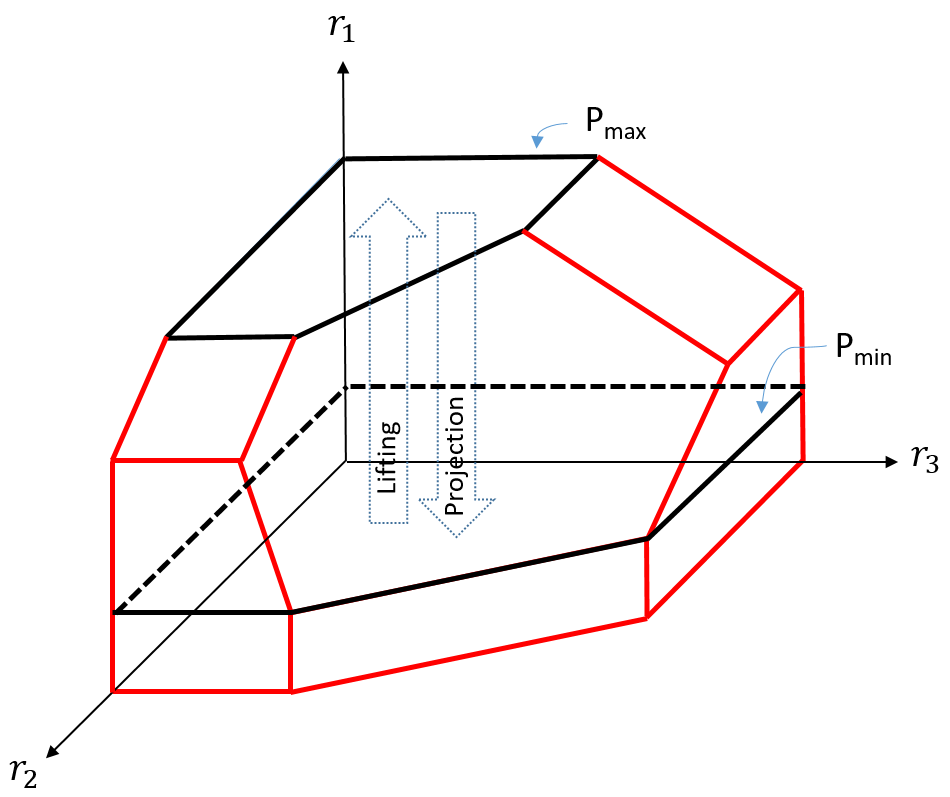}
   \caption{Example of nested projections of a three dimensional multiple access region with rate-tuple $(r_1,r_2,r_3)$ on the sub-space $(r_2,r_3)$.}
   \label{fig2}
 \end{figure}


\begin{thebibliography}{100}


  \bibitem{HK1}
Amir K. Khandani,  ``Capacity Region of Two-users Weak Gaussian Interference Channel", arXiv:2011.12981 


 \bibitem{HK2}
Amir K. Khandani ``Optimality of Gaussian in Enlarging HK Rate Region, and its Overlap with the Capacity Region of 2-users GIC," arXiv:2012.07820


 \bibitem{HK3} 
D. N. C. Tse and S. V. Hanly, “Multiaccess fading channels. I. Polymatroid
structure, optimal resource allocation and throughput capacities,”
{\em IEEE Transactions on Information Theory}, vol. 44, no. 7, pp. 2796–2815, Nov. 1998 (also see D. Welsh, Matroid Theory, (1976), Academic Press, London)

 \bibitem{HK4} 
M. A. Maddah-Ali, A. Mobasher and A. K. Khandani, "Fairness in Multiuser Systems With Polymatroid Capacity Region," in {\em IEEE Transactions on Information Theory}, vol. 55, no. 5, pp. 2128-2138, May 2009

 \bibitem{HK5} 
T. S. Han and K. Kobayashi, ``A new achievable rate region for the interference channel," in {\em IEEE Transactions on Information Theory}, vol. 27, no. 1, pp. 49-60, January 1981


\end{thebibliography}
\end{document}